\newtheorem{theorem}{Theorem}
\newtheorem{corollary}[theorem]{Corollary}
\newtheorem{lemma}[theorem]{Lemma}
\newenvironment{proof}[1][Proof]{\noindent\textbf{#1.} }{\ \rule{0.5em}{0.5em}}
\begin{document}

\title{On the planar central configurations of rhomboidal and triangular four- and five-body problems}
\slugcomment{Not to appear in Nonlearned J., 45.}
\shorttitle{Rhomboidal and triangular four- and five-body problems}
\shortauthors{Shoaib et al.}

\author{M. Shoaib\altaffilmark{1}}
\affil{Abu Dhabi Men's College, Higher Colleges of Technology, P.O. Box 25035, Abu Dhabi, United Arab Emirates}
\and
\author{A. R. Kashif\altaffilmark{2}}
\affil{Department of Mathematics, Capital University of Sciences and Technology, Islamabad, Pakistan}
\and
\author{I. Sz\"{u}cs-Csillik\altaffilmark{3}}
\affil{Astronomical Institute of the Romanian Academy, Cire\c{s}ilor 19, 400487 Cluj-Napoca, Romania}

\begin{abstract}
We consider a symmetric five-body problem with three unequal collinear masses on the axis of symmetry.
The remaining two masses are symmetrically placed on both sides of the axis of symmetry.
Regions of possible central configurations are derived for the four- and five-body problems.
These regions are determined analytically and explored numerically.
The equations of motion are regularized using Levi-Civita type transformations and then the phase space is investigated for chaotic and periodic orbits by means of Poincar\'{e} surface of sections.
\end{abstract}
\keywords{Central configura\-tion; $n$-body prob\-lem; cha\-os; pe\-riodic orbits}

\section{Introduction}

Studying the four- and five-body problems are very important as (for example) it is known that approximately two-third of the stars in our Galaxy are part of
multi-stellar systems.
This paper focuses on Central Configurations (CC), a very important topic in  the gravitational $n$-body problems (cf.
\cite{DengZhang2014}, \cite{MacMillan}, \citep{simo} and \cite{Libre2008}). 
A central configuration, subject of this paper, in the $n$-body problem is a particular position of the $n$ particles, where the position and acceleration vectors are proportional with the same constant of proportionality (see \cite{meyerhall1992} and \cite{Perez2007}).
CC are useful in understanding the nature of solutions near singularities. They are also useful in providing explicit homographic solutions of the equations of motion and families of periodic solutions.
 

In this paper, we explore the central configurations of rhomboidal and triangular four- and five-body problems.
In 2012, Bakker and Simmons gave a linear stability analysis of a rhomboidal four-body
problem and have shown that isolated binary collisions are regularizable at
the origin. \cite{Lacomba1992} had earlier studied the same problem to
regularize binary collisions. In a 1993 paper, Lacomba and Perez-Chavela
studied it's escape and capture orbits.
Furthermore, \cite{Yan2012} studied the existence and linear stability of periodic orbits
for equal masses and \cite{JI2000} used the Poincar\'{e} sections to find regions of stability
for the rhomboidal four-body problem.  \cite{Corbera2016} show that if the diagonals of a four-body convex central configuration are perpendicular, then the configuration must be a rhombus. We study a similar problem in theorem 3 in which we explicitly show the regions where such a configuration exists. Other studies on the rhomboidal four-body problem include
\cite{Chen2001},  \cite{Delgado1991}, \cite{Hampton2011} and \cite{Waldvogel2012}.

 \cite{Ollongren}  studies a restricted five-body problem having
three bodies of equal mass, $m$, placed on the vertices of the equilateral
triangle; they revolve in the plane of the triangle around their
gravitational centre in circular orbits under the influence of their mutual
gravitational attraction; at the centre a mass of $\beta m$ is present where
$\beta \geq 0$. A fifth body of negligible mass compared to $m$ moves in the
plane under the gravitational attraction of the other bodies. They discuss the
existence and location of the Lagrangian equilibrium points and show that
there are 9 Lagrangian equilibrium points.

\cite{Roberts}  discusses  the relative equilibria for a special case of the 5-body
problem. He considers a configuration which consists of four bodies at the
vertices of a rhombus, with opposite vertices having the same mass, and a
central body. He shows that there exist a one parameter family of degenerate
relative equilibria where the four equal masses are positioned at the
vertices of a rhombus with the remaining body located at the centre.
 \cite{MiocBlaga} discuss the same problem
but in the post Newtonian field of Manev. They prove the existence of
mono-parametric families of relative equilibria for the masses $%
(m_{0},1,m,1,m)$, where $m_{0}$ is the central mass, and prove that the
Manev five-body problem with masses $(m_{0},1,m,1,m)$ admits relative
equilibria regardless of the value of the mass of the central body. A
continuum of such equilibria (as in the Newtonian field) does not exist in
the Manev rhomboidal five-body problem.
\cite{Corbera2006} found infinitely many periodic orbits for a rhomboidal
five-body problem when two pairs of masses are placed at the vertices of a
rhombus and a stationary fifth mass is placed at the origin.
Moreover, \cite{Shoaib2012} derive regions of central configurations for the same
model. \cite{Kulesza2011} and \cite{Marchesin2013} study various aspects of
restricted rhomboidal five-body problem with four positive masses on
the vertices of a rhombus and the fifth infinitesimal mass in the plane of the
four masses. \cite{Martha2016} study the central configurations of a rhombus like five-body problem with four equal masses and a fifth infinitesimal mass. \cite{Shoaib2013} and \cite{Marian2010} derive regions of central configuration for a special case of rhomboidal five-body problem with two axes of symmetry and two pairs of equal masses and fifth mass at the centre of mass. They show the existence of regions with a continuous family of central configurations. In this paper, in theorem 3, we show that if the fifth mass is moved from the centre of mass to anywhere on the axes of symmetry then there are no central configurations. Other studies on the rhomboidal and triangular five-body problem include \cite{Lee2009}, \cite{Shoaib2016} and \cite{Hampton2005}

In the present paper we set up a five-body problem which has three collinear
unequal masses on the axis of symmetry. The remaining two masses are
symmetrically placed on both sides of the axis of symmetry. First four of the
masses effectively form a rhombus, with the fifth mass placed anywhere on the axis of symmetry.
For a second case, the five masses will form a triangle, with
one of the masses moved up on the axis of symmetry. As a particular case, we also discuss a four-body configuration with a zero central mass. The
paper is organized as follows: in section 2 we set up the problem and list
the main results, in sections 3, 4 and 5 we give analytical proofs and numerically
explore these results. In section 6, we give the Hamiltonian
equations of motion, regularize the singularities and then use
Poincar\'{e} surface of sections to investigate the phase space, analyzing the
effect of changing the mass of the central body on the stability of rhomboidal five-body problem. Conclusions are given in section 7.


\section{Main Results}

Let us consider $n-1$ mass points $(m_{0}, m_{1}, ..., m_{n-1})$, $m_i>0, i=\overline{0,n-1}$, the position vectors of $n-1$ mass points $r_i$, $i=\overline{0,n-1}$, and the inter-body distances $r_{ij}$, $i,j,=\overline{0,n-1}$.

Using \cite{Moeckel2014} notation, it is known that this $n$-body system forms a planar non-collinear central configuration if the following holds:
\begin{equation}
f_{ij}=\displaystyle\sum_{k=0,k\neq i,j}^{n-1}m_{k}(R_{ik}-R_{jk})\Delta _{ijk}=0,
\label{1.1}
\end{equation}%
where $R_{ij}=\frac{1}{r_{ij}^{3}}$ and $\Delta _{ijk}=(r_{i}-r_{j})\wedge
(r_{i}-r_{k})$. The $\Delta _{ijk}$ represent the areas of the
triangle determined by $(r_{i}-r_{j})$ and $(r_{i}-r_{k})$. For $n=5$ we obtain the general non-collinear five-body problem with the following ten CC equations:
\begin{eqnarray}
f_{01} &=&m_{2}(R_{02}-R_{12})\Delta _{012}+m_{3}(R_{03}-R_{13})\Delta
_{013}+  \nonumber \\
&+&m_{4}(R_{04}-R_{14})\Delta _{014},  \label{f01} \\
f_{02} &=&m_{1}(R_{01}-R_{21})\Delta _{021}+m_{3}(R_{03}-R_{23})\Delta
_{023}+  \nonumber \\
&+&m_{4}(R_{04}-R_{24})\Delta _{024},  \label{f02} \\
f_{03} &=&m_{1}(R_{01}-R_{31})\Delta _{031}+m_{2}(R_{02}-R_{32})\Delta
_{032}+  \nonumber \\
&+&m_{4}(R_{04}-R_{34})\Delta _{034},  \label{f03} \\
f_{04} &=&m_{1}(R_{01}-R_{41})\Delta _{041}+m_{2}(R_{02}-R_{42})\Delta
_{042}+  \nonumber \\
&+&m_{3}(R_{03}-R_{43})\Delta _{043},  \label{f04} \\
f_{12} &=&m_{0}(R_{10}-R_{20})\Delta _{120}+m_{3}(R_{13}-R_{23})\Delta
_{123}+  \nonumber \\
&+&m_{4}(R_{14}-R_{24})\Delta _{124},  \label{f12} \\
f_{13} &=&m_{0}(R_{10}-R_{30})\Delta _{130}+m_{2}(R_{12}-R_{32})\Delta
_{132}+  \nonumber \\
&+&m_{4}(R_{14}-R_{34})\Delta _{134},  \label{f13} \\
f_{14} &=&m_{0}(R_{10}-R_{40})\Delta _{140}+m_{2}(R_{12}-R_{42})\Delta
_{142}+  \nonumber \\
&+&m_{3}(R_{13}-R_{43})\Delta _{143},  \label{f14} \\
f_{23} &=&m_{0}(R_{20}-R_{30})\Delta _{230}+m_{1}(R_{21}-R_{31})\Delta
_{231}+  \nonumber \\
&+&m_{4}(R_{24}-R_{34})\Delta _{234},  \label{f23} \\
f_{24} &=&m_{0}(R_{20}-R_{40})\Delta _{240}+m_{1}(R_{21}-R_{41})\Delta
_{241}+  \nonumber \\
&+&m_{3}(R_{23}-R_{43})\Delta _{243},  \label{f24} 
\end{eqnarray}
\begin{eqnarray}
f_{34} &=&m_{0}(R_{30}-R_{40})\Delta _{340}+m_{1}(R_{31}-R_{41})\Delta
_{341}+  \nonumber \\
&+&m_{2}(R_{32}-R_{42})\Delta _{342}.  \label{f34}
\end{eqnarray}

\begin{lemma}
Dziobek equations \citep{Dziobek1900, Llibre2015, Libreetal2015} for a five-body problem when
the five masses have position vectors $\mathbf{r}_{0}=(0,w),$ $\mathbf{r}%
_{1}=(-1,0),$ $\mathbf{r}_{2}=(0,s),$ $\mathbf{r}_{3}=(1,0),$ $\mathbf{r}%
_{4}=(0,-t)$, where $s,t,w\in \mathbb{R}$ are
\begingroup\makeatletter
\check@mathfonts
\begin{eqnarray}
f_{01}&:=& m_{1}(R_{03}-R_{13})\Delta _{013}+m_{2}(R_{02}-R_{12})\Delta
_{012}+  \nonumber \\
&+&m_{4}(R_{04}-R_{14})\Delta _{014}=0,  \label{12a} \\
f_{12}&:=& m_{0}(R_{10}-R_{20})\Delta_{120}+m_{1}(R_{13}-R_{23})\Delta_{123}+
\nonumber \\
&+&m_{4}(R_{14}-R_{24})\Delta_{124}=0,  \label{12b} \\
f_{14}&:=& m_{0}(R_{10}-R_{40})\Delta _{140}+m_{1}(R_{13}-R_{43})\Delta
_{143}+  \nonumber \\
&+& m_{2}(R_{12}-R_{42})\Delta _{142}=0.  \label{12c}
\end{eqnarray}
\endgroup

\begin{proof}
Using the definition of $R_{ij}$, $\Delta _{ijk}$ and $\mathbf{r}_{i}$ $%
(i=0,1,2,3,4)$ we obtain
\begin{eqnarray}
R_{01} &=&R_{03}=\frac{1}{(1+w^{2})^{\frac{3}{2}}},\quad R_{02}=\frac{1}{%
|s-w|^{3}},  \label{R} \\
R_{13} &=&\frac{1}{8},R_{12}=R_{23}=\frac{1}{(1+s^{2})^{\frac{3}{2}}},R_{24}=%
\frac{1}{(s+t)^{3}},  \nonumber \\
R_{14} &=&R_{34}=\frac{1}{(1+t^{2})^{\frac{3}{2}}},\quad R_{04}=\frac{1}{%
|w+t|^{3}},  \nonumber
\end{eqnarray}%
and
\begin{eqnarray}\label{delta}
\Delta _{ijk} &=&-\Delta _{jik}=-\Delta _{ikj}=-\Delta _{kji},  \nonumber\\
\Delta _{ijk} &=&\Delta _{jki}=\Delta _{kij},  \nonumber \\
\Delta _{ijk} &=&0,\quad if\quad i=j\quad or\; i=k\; or\; j=k, \\
\Delta _{012} &=&\Delta _{023}=w-s,\quad \Delta _{013}=2w,  \nonumber \\
\Delta _{014} &=&\Delta _{043}=w+t,\quad \Delta _{132}=2s,\quad \Delta
_{143}=2t,  \nonumber \\
\Delta _{142} &=&\Delta _{324}=s+t,\quad \Delta _{024}=0.  \nonumber
\end{eqnarray}%
Using the symmetry of the problem through (\ref{R}-\ref{delta}), substituting $m_{1}=m_{3}$ in Eqs. \ref{f01}-\ref{f34}, and applying the corresponding relations
from equations (\ref{R}) and (\ref{delta}) we see that $f_{01}$
and $f_{03}$ are identical (Eqs. \ref{f01} and \ref{f03}). Similarly, $f_{12}$ is identical to $f_{23}$ (Eqs. \ref{f12} and \ref{f23}), and $f_{14}$
is identical to $f_{34}$ (Eqs. \ref{f14} and \ref{f34}). The
substitution of $R_{10}=R_{30}$, $R_{12}=R_{32}$, and $R_{14}=R_{34}$
implies that $f_{13}=0$. The substitution of $\Delta _{024}=\Delta
_{042}=\Delta _{240}=0$ in equations \ref{f02}, \ref{f04} and \ref{f24} gives %
\begingroup\makeatletter
\check@mathfonts
\begin{eqnarray}
f_{02} :=((R_{01}-R_{21})\Delta _{021}+(R_{03}-R_{23})\Delta _{023})m_{1}, &&
\label{13a} \\
f_{04} :=((R_{01}-R_{41})\Delta _{041}+(R_{03}-R_{43})\Delta _{043})m_{1}, &&
\label{13b} \\
f_{24} :=((R_{21}-R_{41})\Delta _{241}+(R_{23}-R_{43})\Delta _{243})m_{1}. &&
\label{13c}
\end{eqnarray}
\endgroup
Since $R_{01}=R_{03}$, $R_{23}=R_{21}$, $R_{41}=R_{43}$, and $\Delta
_{023}=-\Delta _{021}$, $\Delta _{041}=-\Delta _{043}$, $\Delta _{241}=-\Delta
_{243}$ we get $f_{02}=$ $f_{04}=$ $f_{24}=0$. We have shown that  $%
f_{01}=f_{03}$, $f_{12}=f_{23}$, $f_{14}=f_{34}$, $f_{02}=$ $%
f_{04}=f_{24}=f_{13}=0$. Consequently, $f_{01}$, $f_{12}$ and $f_{14}$
are the only necessary equations for general five-body problem (corresponding to \ref{12a}, \ref{12b}, \ref{12c}). This completes the proof of Lemma 1.
\end{proof}
\end{lemma}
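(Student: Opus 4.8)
The plan is to avoid manipulating all ten equations (\ref{f01})--(\ref{f34}) by hand and instead exploit the single reflection symmetry of the configuration. First I would record the elementary geometric data: the mutual distances immediately give the equalities $R_{01}=R_{03}$, $R_{12}=R_{23}$, $R_{14}=R_{34}$, $R_{10}=R_{30}$ together with $R_{13}=\frac{1}{8}$, while the signed areas $\Delta_{ijk}=(\mathbf{r}_i-\mathbf{r}_j)\wedge(\mathbf{r}_i-\mathbf{r}_k)$ obey the usual antisymmetry in their indices and satisfy $\Delta_{024}=0$ because $\mathbf{r}_0,\mathbf{r}_2,\mathbf{r}_4$ are collinear on the $y$-axis. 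The structural point is that the reflection $\rho:(x,y)\mapsto(-x,y)$ interchanges $\mathbf{r}_1\leftrightarrow\mathbf{r}_3$ and fixes $\mathbf{r}_0,\mathbf{r}_2,\mathbf{r}_4$; once $m_1=m_3$ it is a symmetry of the whole mass configuration, and I encode it as the index transposition $\sigma=(1\,3)$.

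Two elementary identities then drive the argument. Directly from the definition of $f_{ij}$ and the relation $\Delta_{jik}=-\Delta_{ijk}$ one checks that $f_{ji}=f_{ij}$, so each $f$ depends only on the unordered pair $\{i,j\}$. Next, since $\rho$ is an isometry it preserves every $R_{ij}$ (hence $R_{\sigma(i)\sigma(k)}=R_{ik}$), while being orientation-reversing it negates every wedge, $\Delta_{\sigma(i)\sigma(j)\sigma(k)}=-\Delta_{ijk}$. Re-indexing the defining sum by $k\mapsto\sigma(k)$ and using $m_{\sigma(k)}=m_k$ then yields the master relation $f_{\sigma(i)\sigma(j)}=-f_{ij}$.

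Feeding the orbits of $\sigma$ into this relation completes the proof. The four unordered pairs fixed by $\sigma$, namely $\{0,2\},\{0,4\},\{2,4\}$ and $\{1,3\}$, give $f_{ij}=-f_{ij}$ and therefore vanish identically, recovering $f_{02}=f_{04}=f_{24}=f_{13}=0$; for $\{1,3\}$ this is reinforced by $R_{10}=R_{30}$, $R_{12}=R_{32}$, $R_{14}=R_{34}$, and for the remaining three by $\Delta_{024}=0$ together with the pairwise cancellation of the surviving terms. The three nontrivial $\sigma$-orbits give $f_{03}=-f_{01}$, $f_{23}=-f_{12}$ and $f_{34}=-f_{14}$, so each of $f_{03},f_{23},f_{34}$ vanishes exactly when its partner does. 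Hence the ten Dziobek equations collapse to the three independent conditions $f_{01}=0$, $f_{12}=0$, $f_{14}=0$, that is (\ref{12a})--(\ref{12c}).

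I expect the only delicate point to be the sign bookkeeping in the master relation: one must confirm that the orientation reversal under $\rho$ genuinely produces the minus sign, so that a term-by-term check gives, for instance, $f_{03}=-f_{01}$ rather than $+f_{01}$. This sign is harmless for the central-configuration problem, since we only use the \emph{zero sets} of the $f_{ij}$, but it is precisely the place where a careless computation would go wrong; the remaining work of substituting the explicit values of $R_{ij}$ and $\Delta_{ijk}$ is routine.
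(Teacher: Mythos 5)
Your proof is correct, and it takes a genuinely different route from the paper's. The paper argues by direct computation: it records the explicit values of $R_{ij}$ and $\Delta_{ijk}$ in the given coordinates, substitutes $m_{1}=m_{3}$ into all ten equations, and checks pairwise that $f_{01}$ and $f_{03}$ coincide (likewise $f_{12},f_{23}$ and $f_{14},f_{34}$), while $f_{02}$, $f_{04}$, $f_{24}$, $f_{13}$ vanish term by term. You replace all of that with two structural identities --- the index symmetry $f_{ji}=f_{ij}$ and the equivariance relation $f_{\sigma(i)\sigma(j)}=-f_{ij}$ for the transposition $\sigma=(1\,3)$ induced by the orientation-reversing reflection $(x,y)\mapsto(-x,y)$ --- and then read the lemma off the orbits of $\sigma$ on unordered pairs: the four pairs fixed by $\sigma$ force $f_{02}=f_{04}=f_{24}=f_{13}=0$, and the three two-element orbits show that the remaining six equations come in equivalent pairs, leaving exactly $f_{01}=f_{12}=f_{14}=0$. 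Your route is shorter, essentially coordinate-free, and generalizes verbatim to any planar configuration invariant under a mass-preserving, orientation-reversing isometry; it also gets the sign right, since your master relation yields $f_{03}=-f_{01}$ rather than $f_{03}=f_{01}$, whereas the paper's claim that these equations are \emph{identical} is off by a sign (harmless, because only the zero sets matter, but your bookkeeping is the precise statement, and you correctly flagged it as the delicate point). What the paper's computational route buys in exchange is that the explicit values of $R_{ij}$ and $\Delta_{ijk}$ assembled during its proof are reused heavily in the proofs of Theorems 2--4; with your argument those values would still need to be computed separately for the later sections.
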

\begin{figure}[!htb]
{\includegraphics[width=0.45\textwidth]{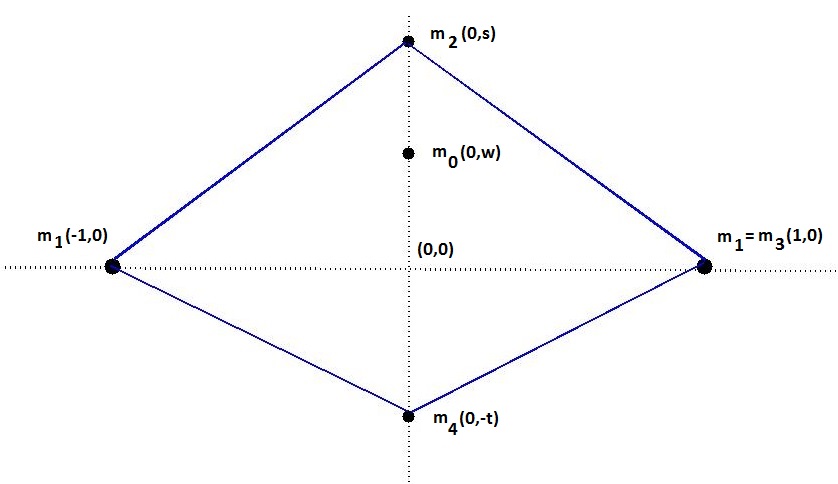}}
\caption{Rhomboidal five-body configurations.}
\label{S1-5}
\end{figure}
\begin{figure}[!htb]
{\includegraphics[width=0.45\textwidth]{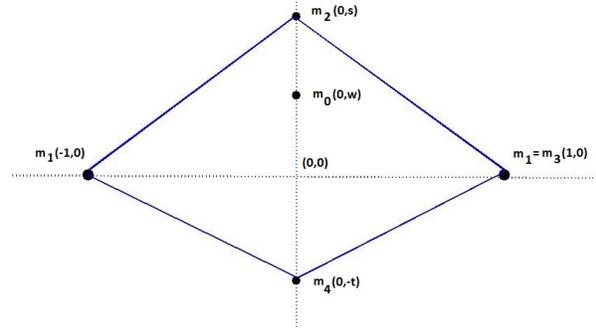}}
\caption{Triangular five-body configurations.}
\label{S2-5}
\end{figure}
\begin{theorem}
\label{theorems=t}Consider two pairs of equal masses on the vertices of a
rhombus with a fifth mass on one of the axis of symmetry. The five positive masses
have their position vectors $\mathbf{r}_{0}=(0,w),$ $\mathbf{r}_{1}=(-1,0),$ $%
\mathbf{r}_{2}=(0,t),$ $\mathbf{r}_{3}=(1,0),$ $\mathbf{r}_{4}=(0,-t)$,
where $t \in \mathbb{R}^+,w\in \mathbb{R}\backslash\{0\}$. Then their is no  central configuration of this type.
\end{theorem}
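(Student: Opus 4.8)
The plan is to apply Lemma~1 to collapse the ten central-configuration equations to the three relations $f_{01}=f_{12}=f_{14}=0$ of (\ref{12a})–(\ref{12c}), and then to show these are mutually inconsistent for positive masses. First I would specialise the data of (\ref{R})–(\ref{delta}) to the theorem's configuration, i.e. put $s=t$ and impose the two equal-mass pairs $m_1=m_3$, $m_2=m_4$. This gives $R_{01}=R_{03}=(1+w^2)^{-3/2}$, $R_{02}=|t-w|^{-3}$, $R_{04}=(w+t)^{-3}$, $R_{12}=R_{14}=R_{23}=(1+t^2)^{-3/2}$, $R_{13}=\frac18$, $R_{24}=(2t)^{-3}$, together with $\Delta_{120}=w-t$, $\Delta_{140}=w+t$, $\Delta_{123}=\Delta_{124}=-2t$ and $\Delta_{143}=\Delta_{142}=2t$. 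Since reflecting the configuration across the $x$-axis interchanges the points of $m_2$ and $m_4$, fixes $m_1,m_3$, and sends $w\mapsto-w$, and since $m_2=m_4$, the whole setup is invariant under $w\mapsto-w$; hence I may assume $w>0$ (recall $w\neq0$ is given).

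The decisive step is to form the combination $f_{12}+f_{14}$. Writing these two equations out, the terms proportional to $m_1$ cancel identically, and, because $m_2=m_4$, the terms proportional to $m_2$ and $m_4$ cancel as well, leaving only
\begin{equation}
m_0\left[(R_{01}-R_{02})(w-t)+(R_{01}-R_{04})(w+t)\right]=0.
\label{keyrel}
\end{equation}
Since $m_0>0$, (\ref{keyrel}) reduces to the purely geometric identity $2wR_{01}=R_{02}(w-t)+R_{04}(w+t)$, i.e.
\begin{equation}
\frac{2w}{(1+w^2)^{3/2}}=\frac{w-t}{|t-w|^{3}}+\frac{1}{(w+t)^{2}},
\label{geomrel}
\end{equation}
in which every mass has disappeared. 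It therefore suffices to prove that (\ref{geomrel}) has no solution with $w>0$, $t>0$, $w\neq t$; the third equation $f_{01}=0$ then never has to be examined.

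Finally I would dispose of (\ref{geomrel}) by a sign-and-size estimate in the two geometric regimes. If $0<w<t$, the right-hand side equals $-(t-w)^{-2}+(w+t)^{-2}$, which is negative because $t-w<t+w$, while the left-hand side is positive; so no solution exists. If $w>t>0$, the right-hand side is $(w-t)^{-2}+(w+t)^{-2}$, and AM--GM gives $(w-t)^{-2}+(w+t)^{-2}\ge \frac{2}{(w-t)(w+t)}=\frac{2}{w^2-t^2}>\frac{2}{w^2}$; on the other hand $1+w^2>w^2$ yields $\frac{2w}{(1+w^2)^{3/2}}<\frac{2w}{w^3}=\frac{2}{w^2}$. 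Thus the left-hand side is strictly below $2/w^2$ and the right-hand side strictly above it, so (\ref{geomrel}) fails. As $w=t$ is the collision of $m_0$ with $m_2$ and $w=0$ is excluded, there is no admissible $(w,t)$, and hence no central configuration of this type.

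I expect the main obstacle to be locating the correct linear combination: it is $f_{12}+f_{14}$, rather than their difference, that annihilates all mass dependence, and this happens only because the hypothesis $m_2=m_4$ forces the $m_2$- and $m_4$-terms to cancel against each other. Once the problem is compressed into the single relation (\ref{geomrel}), nonexistence follows from the elementary comparison of $\frac{2w}{(1+w^2)^{3/2}}$ with $2/w^2$ against the AM--GM lower bound on the right-hand side, so no delicate estimate is required.
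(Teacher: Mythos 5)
Your proof is correct, and it takes a genuinely different — and in the decisive steps stronger — route than the paper's. Both arguments funnel through the same geometric constraint: your identity $f_{12}+f_{14}=m_0\left[(R_{01}-R_{02})(w-t)+(R_{01}-R_{04})(w+t)\right]$ is exactly the paper's $m_0\,f_{A_1A_2}=m_0(A_1+A_2)$, which the paper instead extracts as the vanishing-determinant condition of a homogeneous $3\times 3$ system in $(m_0,m_1,m_2)$; your derivation is cleaner because it shows directly that $m_0>0$ forces $A_1+A_2=0$, with no need to worry about the other factor of the determinant (the paper defers that to the $m_0=0$ case of its Theorem 3). From there the approaches diverge: the paper splits into sign regimes, rewrites the constraint as polynomials $G_1,G_2,G_3$, and in the regimes $t<w,\ w>0$ and $t<|w|,\ w<0$ it locates roots ($t_1(w)=\sqrt{f_2(w)}$ and $t_5(w)$) and then excludes them by checking the signs of the mass ratios $\mu_0,\mu_1$ against numerically generated figures. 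You never need any of that: the reflection symmetry $w\mapsto -w$ (legitimate precisely because $m_2=m_4$) disposes of $w<0$, a sign comparison kills $0<w<t$, and the chain $\frac{2w}{(1+w^2)^{3/2}}<\frac{2}{w^2}<\frac{2}{w^2-t^2}\le\frac{1}{(w-t)^2}+\frac{1}{(w+t)^2}$ kills $0<t<w$, so the constraint has no admissible solutions at all and mass positivity never enters. This makes your proof fully analytic and self-contained, where the paper's cases (b) and (c) lean on figures. Your estimate also exposes a glitch in the paper's case (b): since $f_2(w)>w^2$ for $w>0$, the root $t_1(w)=\sqrt{f_2(w)}$ satisfies $t_1(w)>w$ and thus lies outside the regime $t<w$ in which $G_2=0$ is equivalent to the constraint — consistent with your conclusion that no solutions exist there, and rendering the paper's positivity analysis at $t=t_1(w)$ moot. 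What the paper's heavier framework buys is the explicit mass-ratio formulas ($\mu_1=-C_0/B_0$, etc.), which fit the template reused in its Theorems 3 and 4 where central configurations do exist and positivity must genuinely be checked; for the pure nonexistence statement of Theorem 2, your elementary inequalities are the better tool.
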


\begin{theorem}
\label{theorem3} Let $\mathbf{r}_{1}=(-1,0),$ $\mathbf{r}_{2}=(0,s),$ $%
\mathbf{r}_{3}=(1,0),$ $\mathbf{r}_{4}=(0,-t)$, where $s,t\in \mathbb{R}^{+}$. Then, the configuration $\mathbf{r}=(\mathbf{r}_{1},\mathbf{r}_{2},\mathbf{r%
}_{3},\mathbf{r}_{4})$ given in figure 1 will form a central configuration in the region
\begingroup\makeatletter
\check@mathfonts
\begin{eqnarray}
R_{\mu_2\mu_4}(s,t) &=&\{(s,t)|(0.268<t\leq 0.577 \wedge \nonumber\\
&\wedge& \frac{1-t^2}{2t} < s < \sqrt{3}) \vee (0.577 < t < \sqrt{3} \wedge \nonumber\\
&\wedge& \sqrt{t^{2}+1}-t < s < \sqrt{3})\}.
\label{rm2m4}
\end{eqnarray}
\endgroup
\end{theorem}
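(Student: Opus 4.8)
The plan is to treat this as a four--body problem (the central mass being absent) and to use the four central--configuration equations obtained from (\ref{f12})--(\ref{f34}) by deleting the terms containing $m_0$. First I would record the mutual data for $\mathbf{r}_1=(-1,0)$, $\mathbf{r}_2=(0,s)$, $\mathbf{r}_3=(1,0)$, $\mathbf{r}_4=(0,-t)$, namely $R_{13}=\frac{1}{8}$, $R_{12}=R_{23}=(1+s^2)^{-3/2}$, $R_{14}=R_{34}=(1+t^2)^{-3/2}$, $R_{24}=(s+t)^{-3}$, together with $\Delta_{123}=-2s$, $\Delta_{124}=-(s+t)$, $\Delta_{142}=s+t$, $\Delta_{143}=2t$ and their antisymmetric images. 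Exactly as in Lemma 1, the reflection symmetry in the $y$--axis then collapses the system: $f_{13}\equiv 0$ because $R_{12}=R_{32}$ and $R_{14}=R_{34}$; the equation $f_{24}$ reduces to a multiple of $(m_3-m_1)$, so a central configuration of this symmetric type requires $m_1=m_3$; and $f_{23},f_{34}$ duplicate $f_{12},f_{14}$ up to sign. Hence only $f_{12}=0$ and $f_{14}=0$ survive.

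Second, after imposing $m_1=m_3$ and normalising $m_1=m_3=1$, each surviving equation is linear in a single unknown mass, and I would solve them in closed form as
\[
m_4=-\frac{2s\,A}{(s+t)\,B},\qquad m_2=-\frac{2t\,D}{(s+t)\,C},
\]
with $A=\frac{1}{8}-(1+s^2)^{-3/2}$, $B=(1+t^2)^{-3/2}-(s+t)^{-3}$, $C=(1+s^2)^{-3/2}-(s+t)^{-3}$ and $D=\frac{1}{8}-(1+t^2)^{-3/2}$. Since $s,t>0$ the prefactors $2s/(s+t)$ and $2t/(s+t)$ are positive, so the requirement of positive masses $m_2,m_4>0$ is equivalent to the two sign conditions $A/B<0$ and $D/C<0$.

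Third, I would convert these sign conditions into inequalities in $(s,t)$. Direct estimates give $A<0\iff s<\sqrt{3}$, $D<0\iff t<\sqrt{3}$, $B>0\iff(s+t)^2>1+t^2\iff s>\sqrt{t^2+1}-t$, and $C>0\iff(s+t)^2>1+s^2\iff s>\frac{1-t^2}{2t}$. Because $\sqrt{t^2+1}-t<1<\sqrt{3}$, and $\frac{1-t^2}{2t}<0$ once $t>1$, the opposite sign combinations ($A>0,B<0$) and ($D>0,C<0$) cannot occur for $s,t>0$; hence $m_4>0\iff \sqrt{t^2+1}-t<s<\sqrt{3}$, while $m_2>0\iff t<\sqrt{3}$ and $s>\frac{1-t^2}{2t}$. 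Intersecting these, the admissible set is bounded by $s<\sqrt{3}$, $t<\sqrt{3}$ and $s>\max(\sqrt{t^2+1}-t,\ \frac{1-t^2}{2t})$.

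Finally I would decide which lower bound is active and when the set is nonempty, which is the only delicate bookkeeping. Writing $\frac{1-t^2}{2t}-(\sqrt{t^2+1}-t)=\frac{1+t^2}{2t}-\sqrt{t^2+1}$, whose sign follows that of $\sqrt{t^2+1}-2t$, the two curves cross at $t=1/\sqrt{3}\approx0.577$: for $t\le 1/\sqrt{3}$ the bound $\frac{1-t^2}{2t}$ dominates and for $t>1/\sqrt{3}$ the bound $\sqrt{t^2+1}-t$ dominates. Nonemptiness on the first branch needs $\frac{1-t^2}{2t}<\sqrt{3}$, i.e. $t^2+2\sqrt{3}\,t-1>0$, giving $t>2-\sqrt{3}\approx0.268$; on the second branch $\sqrt{t^2+1}-t<\sqrt{3}$ holds throughout $1/\sqrt{3}<t<\sqrt{3}$. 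Splitting the admissible set at $t=1/\sqrt{3}$ and using these thresholds reproduces precisely the region (\ref{rm2m4}). The main obstacle is not any individual computation but keeping all four sign conditions simultaneously consistent and correctly pinning down the crossover value $1/\sqrt{3}$ and the endpoint $2-\sqrt{3}$ that separate the two pieces of the region.
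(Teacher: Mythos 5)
Your proposal is correct and follows essentially the same route as the paper: reduce by the reflection symmetry to the two surviving Dziobek equations, solve them linearly for the mass ratios $m_2/m_1$ and $m_4/m_1$ (your $A,B,C,D$ factors are exactly the paper's $f_5,\dots,f_8$ up to positive multiples), translate positivity into the inequalities $s<\sqrt{3}$, $t<\sqrt{3}$, $s>\sqrt{t^2+1}-t$, $s>\frac{1-t^2}{2t}$, and intersect. If anything, you supply more detail than the paper at two points it glosses over — forcing $m_1=m_3$ from $f_{24}$ rather than assuming it, and the explicit crossover analysis at $t=1/\sqrt{3}$ with the nonemptiness threshold $t>2-\sqrt{3}$ — but the underlying argument is the same.
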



\begin{theorem}
\label{theorem4} Let $\mathbf{r}_{0}=(0,w),$ $\mathbf{r}_{1}=(-1,0),$ $%
\mathbf{r}_{2}=(0,s),$ $\mathbf{r}_{3}=(1,0),$ $\mathbf{r}_{4}=(0,-t)$,
where $s,t,w\in \mathbb{R}$.

\begin{enumerate}
\item[(a)] The configuration $\mathbf{r}=(\mathbf{r}_{0},\mathbf{r}_{1},%
\mathbf{r}_{2},\mathbf{r}_{3},\mathbf{r}_{4})$ will form a central
configuration with \begingroup\makeatletter
\check@mathfonts
\begin{eqnarray}
\mu _{0} &=&\frac{m_{0}}{m_{1}}=\frac{{A}_{{1}}{C}_{{2}}{C}_{{3}}-({B}_{{1}}{%
B}_{{3}}{C}_{{2}}+{B}_{{2}}{C}_{{1}}{C}_{{3}})}{{A}_{{2}}{C}_{{1}}{C}_{{3}}+{%
A}_{{3}}{B}_{{1}}{C}_{{2}}},  \nonumber
\end{eqnarray}
\begin{eqnarray}
\mu _{2} &=&\frac{m_{2}}{m_{1}}=\frac{{A}_{{3}}{B}_{{2}}{C}_{{1}}-({A}_{{1}}{%
A}_{{3}}{C}_{{2}}+{A}_{{2}}{B}_{{3}}{C}_{{1}})}{{A}_{{2}}{C}_{{1}}{C}_{{3}}+{%
A}_{{3}}{B}_{{1}}{C}_{{2}}}, {\label{11}}
\end{eqnarray}
\begin{eqnarray}
\mu _{4} &=&\frac{m_{4}}{m_{1}}=\frac{{A}_{{2}}{B}_{{1}}{B}_{{3}}-({A}_{{1}}{%
A}_{{2}}{C}_{{3}}+{A}_{{3}}{B}_{{1}}{B}_{{2}})}{{A}_{{2}}{C}_{{1}}{C}_{{3}}+{%
A}_{{3}}{B}_{{1}}{C}_{{2}}},  \nonumber
\end{eqnarray}
\endgroup
where
\begin{eqnarray*}
A_{1} &=&(R_{03}-R_{13})\Delta _{013}, \quad A_{2}=(R_{10}-R_{20})\Delta
_{120},  \nonumber \\
A_{3}&=&(R_{10}-R_{40})\Delta _{140}, \\
B_{1} &=&(R_{02}-R_{12})\Delta _{012}, \quad B_{2}=(R_{13}-R_{23})\Delta
_{123},  \nonumber \\
B_{3}&=&(R_{13}-R_{43})\Delta _{143}, \\
C_{1} &=&(R_{04}-R_{14})\Delta _{014}, \quad C_{2}=(R_{14}-R_{24})\Delta
_{124},  \nonumber \\
C_{3}&=&(R_{12}-R_{42})\Delta _{142}.
\end{eqnarray*}

\item[(b)] The mass ratios $\mu _{0}>0$, $\mu _{2}>0$ and $\mu _{4}>0$ form
a central configuration in
\begin{equation}
R(t,w)=R_{\mu _{0}}\cap R_{\mu _{2}}\cap R_{\mu _{4}},
\end{equation}%
where $R_{\mu _{0}}$, $R_{\mu _{2}}$ and $R_{\mu _{4}}$ are respectively
given by equations (\ref{Rm0}), (\ref{Rm2}), and (\ref{Rm4}).
\end{enumerate}
\end{theorem}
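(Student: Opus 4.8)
Since the configuration here is exactly the symmetric one treated in Lemma 1 (with the central mass $m_0$ at $(0,w)$ and the reflection pair $m_1=m_3$), I would begin by invoking that lemma to collapse the ten equations \eqref{f01}--\eqref{f34} to the three independent Dziobek equations $f_{01}=f_{12}=f_{14}=0$ of \eqref{12a}--\eqref{12c}. With the abbreviations $A_i,B_i,C_i$ supplied in the statement, these three read
\[
m_{1}A_{1}+m_{2}B_{1}+m_{4}C_{1}=0,\quad m_{0}A_{2}+m_{1}B_{2}+m_{4}C_{2}=0,\quad m_{0}A_{3}+m_{1}B_{3}+m_{2}C_{3}=0 .
\]
The key observation for part (a) is that this is \emph{linear} in the masses. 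Dividing through by $m_{1}$ and taking $(\mu_{0},\mu_{2},\mu_{4})$ as unknowns puts the system in the form
\[
\begin{pmatrix} 0 & B_{1} & C_{1}\\ A_{2} & 0 & C_{2}\\ A_{3} & C_{3} & 0\end{pmatrix}\begin{pmatrix}\mu_{0}\\ \mu_{2}\\ \mu_{4}\end{pmatrix}=\begin{pmatrix}-A_{1}\\ -B_{2}\\ -B_{3}\end{pmatrix}.
\]

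I would then solve this by Cramer's rule. Expanding the coefficient determinant along the first row gives $D=A_{2}C_{1}C_{3}+A_{3}B_{1}C_{2}$, which is precisely the common denominator appearing in all three formulas. Replacing the first, second, and third columns in turn by the constant vector $(-A_{1},-B_{2},-B_{3})^{\mathsf T}$ and expanding each resulting determinant along its first column should reproduce, after collecting terms, the stated numerators $A_{1}C_{2}C_{3}-(B_{1}B_{3}C_{2}+B_{2}C_{1}C_{3})$ for $\mu_{0}$, $A_{3}B_{2}C_{1}-(A_{1}A_{3}C_{2}+A_{2}B_{3}C_{1})$ for $\mu_{2}$, and $A_{2}B_{1}B_{3}-(A_{1}A_{2}C_{3}+A_{3}B_{1}B_{2})$ for $\mu_{4}$. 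This step is purely mechanical and completes part (a), valid wherever $D\neq0$ so that the mass ratios are uniquely determined by the geometry.

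For part (b) the plan is to make the $A_i,B_i,C_i$ explicit by substituting the distances $R_{ij}$ and signed areas $\Delta_{ijk}$ from equations \eqref{R}--\eqref{delta} (now keeping $s$ and $t$ distinct and retaining $w$), so that each $\mu_{j}$ becomes an explicit rational function of $(s,t,w)$. Since $\mu_{0},\mu_{2},\mu_{4}$ share the single denominator $D$, the condition $\mu_{j}>0$ amounts to requiring each numerator to agree in sign with $D$. I would therefore first determine the sign of $D$ across the admissible parameter range, and then translate each positivity requirement $\mu_{j}>0$ into an inequality cutting out a region $R_{\mu_{0}},R_{\mu_{2}},R_{\mu_{4}}$ (the regions named in \eqref{Rm0}, \eqref{Rm2}, \eqref{Rm4}); their intersection is $R(t,w)$, with the numerical thresholds pinned down as in Theorem \ref{theorem3}.

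The main obstacle is exactly this sign analysis. The numerators are combinations of terms such as $(1+w^{2})^{-3/2}$, $|s-w|^{-3}$, $(1+s^{2})^{-3/2}$ and $|w+t|^{-3}$ multiplied by signed-area factors like $w-s$, $w+t$ and $s+t$ that change sign as the collinear masses cross one another on the axis of symmetry. Locating the boundaries of each $R_{\mu_{j}}$ thus requires factoring the numerators where possible, identifying their sign-change loci, and splitting into cases according to the ordering of $w$, $s$ and $-t$ so as to resolve the absolute values $|s-w|$ and $|w+t|$. The delicate point is maintaining consistency of the sign of the common denominator $D$ across these cases, since a sign change in $D$ flips all three positivity conditions simultaneously; careful case bookkeeping there, combined with the numerical boundary values, is what ultimately carves out the region $R(t,w)$.
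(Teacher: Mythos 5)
Your part (a) is correct and is essentially the paper's argument: the paper reduces to the same three Dziobek equations via Lemma 1, divides by $m_{1}$, writes the identical $3\times 3$ linear system, and solves it (the paper says ``after a number of row operations''; your Cramer's-rule route is equivalent, and indeed the coefficient determinant $A_{2}C_{1}C_{3}+A_{3}B_{1}C_{2}$ and the three column-replacement determinants reproduce exactly the stated numerators). Your remark that the formulas are valid only where this determinant is nonzero is a point the paper leaves implicit.

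There is, however, a genuine gap in your plan for part (b): you propose to substitute the $R_{ij}$ and $\Delta_{ijk}$ ``keeping $s$ and $t$ distinct and retaining $w$,'' so that each $\mu_{j}$ is a rational function of the three variables $(s,t,w)$. The paper's proof does not do this and cannot: the very first step of its proof of part (b) is to fix $s=1$, which is what makes $\mu_{0},\mu_{2},\mu_{4}$ functions of $(t,w)$ alone and what makes the regions $R_{\mu_{0}},R_{\mu_{2}},R_{\mu_{4}}$ of equations (\ref{Rm0}), (\ref{Rm2}), (\ref{Rm4}) two-dimensional sets in the $(t,w)$-plane, as the notation $R(t,w)$ in the theorem statement requires. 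The paper even states explicitly (in the ``Note'' after the proof) that finding the CC regions analytically without fixing $s=1$ is impossible; your three-variable sign analysis would therefore not go through as described, and it would not produce the planar regions the theorem asserts. Beyond this, your description of how the sign regions are obtained (``factoring the numerators where possible, identifying their sign-change loci'') is more optimistic than what the paper actually does: the boundaries of $R_{D}$, $R_{N_{\mu_{0}}}$, $R_{N_{\mu_{2}}}$, $R_{N_{\mu_{4}}}$ in Lemmas \ref{lemma5}--\ref{lemma7} and the Appendix are obtained by numerical approximation, case by case, not by closed-form factorization. Your structural observation --- that positivity of each $\mu_{j}$ means its numerator must share the sign of the common denominator $D$, giving $R_{\mu_{j}}=(R_{D}\cap R_{N_{\mu_{j}}})\cup(R_{D}^{c}\cap R_{N_{\mu_{j}}}^{c})$, and that $R(t,w)$ is the triple intersection --- is exactly the paper's; what is missing is the reduction $s=1$ that makes that program executable.
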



\section{Proof of Theorem \protect\ref{theorems=t}}

In order to prove Theorem 2, we write Dziobek's equations of Lemma 1 in a matrix form, after we substitute $m_{2}=m_{4}$:
\[
\left[
\begin{array}{ccc}
0 & B_{0} & C_{0} \\
A_{1} & B_{1} & C_{1} \\
A_{2} & -B_{1} & -C_{1}%
\end{array}%
\right] \cdot \left[
\begin{array}{c}
m_{0} \\
m_{1} \\
m_{2}%
\end{array}%
\right] =\left[
\begin{array}{c}
0 \\
0 \\
0%
\end{array}%
\right] ,
\]%
where
\begin{eqnarray*}
B_{0} &=&(R_{03}-R_{13})\Delta _{013},B_{1}=(R_{13}-R_{23})\Delta _{123}, \\
A_{1} &=&(R_{10}-R_{20})\Delta _{120},A_{2}=(R_{10}-R_{40})\Delta _{140}, \\
C_{1} &=&(R_{14}-R_{24})\Delta _{124}, \\
C_{0} &=&(R_{02}-R_{12})\Delta _{012}+(R_{04}-R_{14})\Delta _{014}.
\end{eqnarray*}%
For the above system of equation, in order to have a non-trivial solution, the determinant of the
augmented matrix must be zero. After adding row 2 with row 3 of the
augmented matrix, the third row reduces to $\{A_{1}+A_{2},0,0\}$.
Hence, $f_{A_{1}A_{2}}=A_{1}+A_{2}=0$ will imply the existence of non-trivial
solution (the case $m_{0}=0$ will be discussed in detail in section 4). Therefore, from the remaining two equations
\begin{equation}
\mu _{1}=\frac{m_{1}}{m_{2}}=-\frac{C_{0}}{B_{0}},\text{ and \ }\mu _{0}=%
\frac{m_{0}}{m_{2}}=\frac{C_{1}B_{0}-C_{0}B_{1}}{B_{0}A_{2}}
\end{equation}%
we obtain that they will form a central configuration with respect to the geometric constraint
\begin{eqnarray}
f_{A_{1}A_{2}} &=&(R_{01}-R_{02})\Delta _{012}+(R_{01}-R_{04})\Delta _{014}
\nonumber \\
&=&\frac{t-w}{|t-w|^{3}}-\frac{t+w}{|t+w|^{3}}+\frac{2w}{(w^{2}+1)^{\frac{3}{%
2}}}=0.
\end{eqnarray}%
For positive solutions we must have $f_{A_{1}A_{2}}=0$ and $\mu _{1}>0$, $\mu _{2}>0$.

\begin{enumerate}
\item[a.] If $t>|w|, w\in \mathbb{R}\backslash \{0\}$ then $f_{A_{1}A_{2}}=0$
can be written as a polynomial in two variables $t$ and $w$ such that $t\neq \pm w$.
\begin{eqnarray}
G_{1}(w) & := & 2t^{4}w-4t^{2}w^{3}+4tw(w^{2}+1)^{3/2}+  \nonumber \\
&+&2w^{5}=0.
\end{eqnarray}
The polynomial $G_{1}(w)$ has two real negative roots and two complex roots.
As we require $t>0$, therefore this case is not interesting.
\begin{figure}[!htb]
\includegraphics[width=0.44\textwidth]{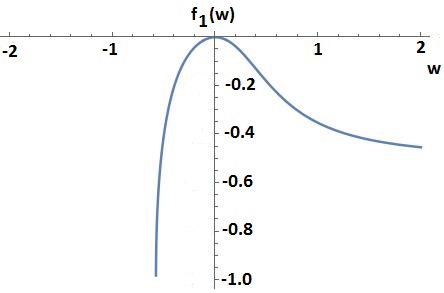} %
\includegraphics[width=0.44\textwidth]{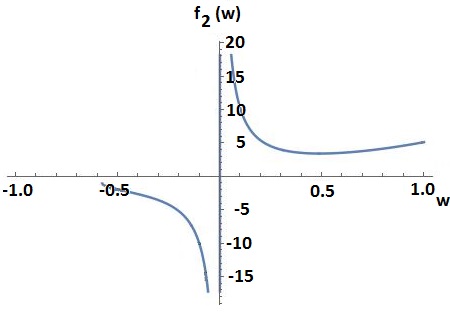}
\caption{a) $f_1(w)$ \hskip3cm b) $f_2(w)$ }
\label{f1w}
\end{figure}
\item[b.] If $t<w,w>0$ \ then $f_{A_{1}A_{2}}=0$ can be written as a
polynomial in two variables $t$ and $w$ such that $t\neq \pm w\neq 0$.
\begin{eqnarray}
G_{2}(w)&:=&t^{4}w-t^{2}(2w^{3}+(1+{w^{2}})^{3/2})+  \nonumber \\
&+&w^{2}(w^{3}-(1+{w^{2}})^{3/2})=0.
\end{eqnarray}
The polynomial $G_{2}(w)$ is quadratic in $t^{2}$ and will have four roots
as functions of $w$.
\begingroup\makeatletter\def\f@size{8}
\check@mathfonts
\[
t(w)=\pm \sqrt{\frac{(w^{2}+1)^{3/2}}{2w}+w^{2}\pm \frac{\sqrt{%
(w^{2}+1)^{3}+8w^{3}(w^{2}+1)^{3/2}}}{2w}}.
\]
\endgroup

The function \begingroup\makeatletter\def\f@size{8}
\check@mathfonts
\[
f_{1}(w)=\frac{(w^{2}+1)^{3/2}}{2w}+w^{2}-\frac{\sqrt{%
(w^{2}+1)^{3}+8w^{3}(w^{2}+1)^{3/2}}}{2w}
\]%
\endgroup
is negative for all values of $w$, see figure \ref{f1w}a. The function %
\begingroup\makeatletter\def\f@size{8}
\check@mathfonts
\[
f_{2}(w)=\frac{(w^{2}+1)^{3/2}}{2w}+w^{2}+\frac{\sqrt{%
(w^{2}+1)^{3}+8w^{3}(w^{2}+1)^{3/2}}}{2w}
\]%
\endgroup
is positive for $w>0$ (see figure \ref{f1w}b). Then $G_{2}(w)$ has two real
roots with one positive and one negative root. The negative root is not interesting.

 The necessary condition for the existence of central
configurations is satisfied at
\[
t_{1}(w)=\sqrt{f_{2}(w)}.
\]%
\begin{figure}[!htb]
\includegraphics[width=0.44\textwidth]{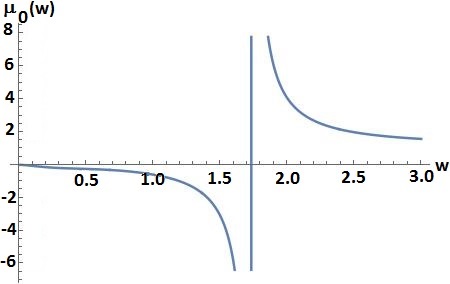} %
\includegraphics[width=0.44\textwidth]{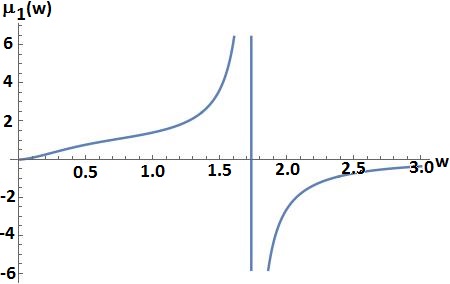}
\caption{a) $\protect\mu_0(w)$ when $t=t_1(w)$ \hskip 0.5cm b) $\protect\mu%
_1(w)$ when $t=t_1(w)$ }
\label{mu0mu1Att=t1}
\end{figure}
It is straightforward to see from figure \ref{mu0mu1Att=t1} that $\mu
_{0}(w)>0$ when $w<\sqrt{3}$, and $\mu _{1}(w)>0$ when $w>\sqrt{3}$. So,
there is no common region where both $\mu _{0}(w)$ and $\mu _{1}(w)$ are
positive (Fig. 5).
\begin{figure}[!htb]
{\includegraphics[width=0.44\textwidth]{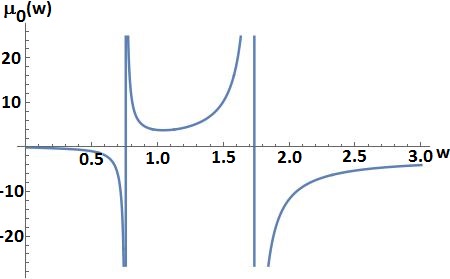} %
\includegraphics[width=0.44\textwidth]{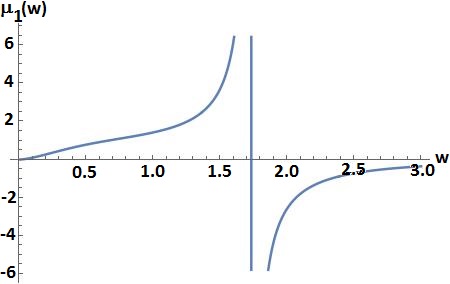}}
\caption{a) $\protect\mu _{0}(w)$ when $t=-t_{1}(w)$ \hskip0.5cm b) $\protect%
\mu _{1}(w)$ when $t=-t_{1}(w)$ }
\label{mu0mu1Att=-t1}
\end{figure}

\item[c.] If $t<|w|, w<0$ then $f_{A_{1}A_{2}}=0$ can be written as a
polynomial in two variables $t$ and $w$ such that $t\neq \pm w\neq 0$. \begingroup%
\makeatletter
\begin{eqnarray}
G_{3}(w)&:=&t^{4}w+t^{2}( ( w^{2}+1) ^{3/2}-2w^{3}) +\nonumber \\
&+& w^{2}(w^{3}+( w^{2}+1)^{3/2})=0.
\end{eqnarray}
\endgroup
\end{enumerate}

The polynomial $G_{3}(w)$ is quadratic in $t^{2}$ and have the following
four real roots as functions of $w$.
\begin{eqnarray}
t_{2}(w)=-\sqrt{f_{3}(w)},\quad t_{3}(w)=\sqrt{f_{3}(w)},  \nonumber \\
t_{4}(w)=-\sqrt{f_{4}(w)},\quad t_{5}(w)=\sqrt{f_{4}(w)},  \nonumber
\end{eqnarray}
where
\begingroup\makeatletter\def\f@size{8}
\check@mathfonts
\[
f_{3}(w)=-\frac{(w^{2}+1)^{3/2}}{2w}+w^{2}-\frac{\sqrt{%
(w^{2}+1)^{3/2}\cdot(1-8w^{3})}}{2w},
\]
\[
f_{4}(w)=-\frac{(w^{2}+1)^{3/2}}{2w}+w^{2}+\frac{\sqrt{%
(w^{2}+1)^{3/2}\cdot(1-8w^{3})}}{2w}.
\]
\endgroup
\begin{figure}[!htb]
\includegraphics[width=0.45\textwidth]{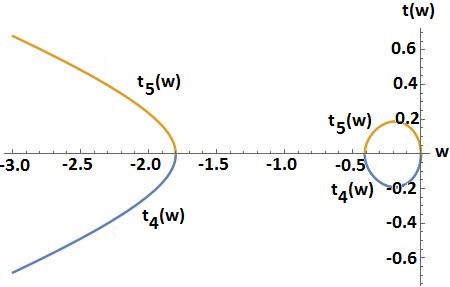} %
\includegraphics[width=0.45\textwidth]{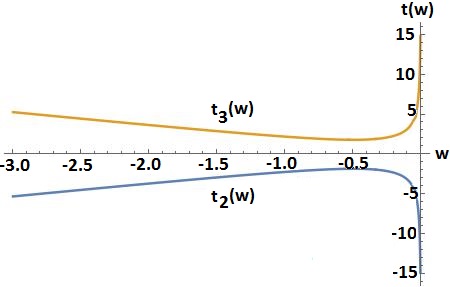}
\caption{Roots of $G_{3}(w)$}
\label{rootsG3}
\end{figure}
Two of the above roots i.e. $t_{3}(w)$ and $t_{5}(w)$ are positive, and $t_{2}(w)$ and $t_{4}(w)$ are negative. These roots are given in figure \ref{rootsG3}.

By the close inspection of figure \ref{rootsG3} it can be seen that $%
t_{3}(w)>|w|$ for all $w<0$. As a result, this root is invalid as we have
the additional constraint of $t<|w|$ for this special case. The
positive root $t_{5}(w)$ remains less than $|w|$ for $w<0$ when it is
defined in $(-4,-1.8)$ and $(-0.4,0)$.

The mass ratio $\mu_1$ is positive when $w\in (-0.2,0)$ as $B_0$ and $C_0$
have opposite signs in this interval, see figure (\ref{muit=t5}a). In the
same interval $A_2B_0$ and $B_0C_1-B_1C_0$ have opposite signs and hence $%
\mu_0$ is negative, no central configurations at $t=t_5(w)$ when $w\in
(-0.4, 0)$. For $w<-1.8$, $C_0$ is an increasing function of $w$ with its
absolute minimum, 0, occurring when $w \to -\infty$. Hence, $C_0>0$ for all $%
w<-1.8$. Similarly, $B_0$ is also positive. Therefore, $\mu_1<0$ for $w<-1.8$
(see figure \ref{muit=t5}b). This completes the proof of Theorem \ref%
{theorems=t}.

\begin{figure}[!htb]
\includegraphics[width=0.45\textwidth]{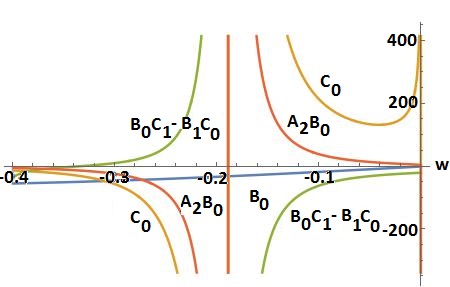} %
\includegraphics[width=0.45\textwidth]{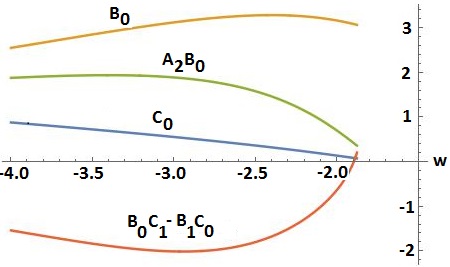}
\caption{The curves $B_0$, $C_0$, $A_2B_0$ and $B_0C_1-B_1C_0$ at $%
t=t_{5}(w) $ such that a) $w\in(-0.4,0)$ b) $w\in(-4,-1.8)$. }
\label{muit=t5}
\end{figure}

\section{Proof of Theorem \protect\ref{theorem3}}

After substituting $m_{0}=0$ in the equations of Lemma 1 and using the
resulting symmetries, the following reduced system of Dziobek's equations is
obtained:
\begin{eqnarray}
(R_{13}-R_{23})\Delta _{123}m_{1}+(R_{14}-R_{24})\Delta _{124}m_{4} &=&0,
\nonumber \\
(R_{13}-R_{14})\Delta _{134}m_{1}+(R_{12}-R_{24})\Delta _{124}m_{2} &=&0,
\label{eqtn4bp}
\end{eqnarray}%
where
\begin{eqnarray*}
R_{23} &=&(1+s^{2})^{-3/2},\quad R_{14}=(1+t^{2})^{-3/2}, \\
R_{24} &=&(t+s)^{-3}, \\
\Delta _{123} &=&-2s,\quad \Delta _{124}=-(s+t),\quad \Delta _{134}=-2t.
\end{eqnarray*}%
As the above system is under determined, therefore we define $\mu
_{2}=m_{2}/m_{1}$ and $\mu _{4}=m_{4}/m_{1}$. Simultaneous solution of
equations (\ref{eqtn4bp}) gives
\begin{eqnarray}
\mu _{2} &=&\frac{(s^{2}+1)^{3/2}t(s+t)^{2}f_{5}(t)}{%
4(t^{2}+1)^{3/2}f_{6}(s,t)}, \\
\mu _{4} &=&\frac{sf_{7}(s)(t^{2}+1)^{3/2}(s+t)^{2}}{%
4(s^{2}+1)^{3/2}f_{8}(s,t)},
\end{eqnarray}%
where
\begin{eqnarray}
f_{5}(t)&=&((t^{2}+1)^{3/2}-8),\nonumber\\
f_{6}(s,t)&=&((s^{2}+1)^{3/2}-(s+t)^{3}),\nonumber\\
f_{7}(s)&=&((s^{2}+1)^{3/2}-8),\nonumber\\
f_{8}(s,t)&=&((t^{2}+1)^{3/2}-(s+t)^{3}).\nonumber
\end{eqnarray}
To find the central configuration (CC) regions, where all the masses are
positive, we need to find regions in the $st$-plane, where both $\mu _{2}$
and $\mu _{4}$ are positive. As $f_{5}(t)$ is a monotonically increasing
function of $t$ for all $t>0$, with a single zero at $t=\sqrt{3}$, therefore
it is straightforward to see that $f_{5}(t)<0$, when $t<\sqrt{3}$. Solving
$f_{6}(s,t)=0$ for variable $t$, we get $t(s)=-s+\sqrt{s^{2}+1}$. It is easy to check
that $f_{6}(s,t)<0$ when $t(s)>-s+\sqrt{s^{2}+1}$. The two functions
$f_{5}(t)$ and $f_{6}(s,t)$ are never simultaneously positive, therefore
$\mu_{2}>0$ in
\begin{equation}
R_{\mu _{2}}=\{(s,t)|t>0\wedge s>0\wedge \sqrt{s^{2}+1}-s<t<\sqrt{3}\}.
\end{equation}%
Similarly, $f_{7}(s)=0$, when $s=\sqrt{3}$ and $f_{8}(s,t)=0$, when $s(t)=-t+%
\sqrt{t^{2}+1}$. Similar $\mu _{4}>0$ in
\begin{equation}
R_{\mu _{4}}=\{(s,t)|s>0\wedge t>0\wedge \sqrt{t^{2}+1}-t<s<\sqrt{3}\}.
\end{equation}
The intersection of $R_{\mu _{2}}$ and $R_{\mu _{4}}$ is $R_{\mu _{2}\mu
_{4}}$ given by the equation (\ref{rm2m4}) (see Fig. \ref{CC4BPRhombus}).

\begin{corollary}
Consider $t<0$ in the setup of Theorem \ref{theorem3}, guaranteeing a
triangular four-body arrangement, then the configuration
$(m_{1}, m_{2}, m_{1}, m_{4})$ will form a triangular central configuration in %
\begingroup\makeatletter
\check@mathfonts
\begin{eqnarray*}
TR_{t_{-}}(s,t) &=&\{(s,t)|(-3.73<t<-\sqrt{3}\wedge \\
&\wedge& h_{1}(t)<s<\sqrt{3}) \vee (-\sqrt{3} <t\leq -1 \wedge \\
&\wedge& (0<s<h_{1}(t) \vee \sqrt{3}<s<h_{2}(t))) \vee \\
&\vee& (-1 <t<-\frac{1}{\sqrt{3}}\wedge \sqrt{3}<s<h_{2}(t)) \\
&\vee& (-\frac{1}{\sqrt{3}} <t<0\wedge h_{2}(t)<s<\sqrt{3})\},
\end{eqnarray*}%
\endgroup
where
\[
h_{1}(t)=\frac{1-t^{2}}{2t},\quad h_{2}(t)=\sqrt{t^{2}+1}-t.
\]
\end{corollary}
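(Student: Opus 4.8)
The plan is to recycle the mass-ratio formulas already produced in the proof of Theorem~\ref{theorem3} and simply extend them from $t>0$ to $t<0$, then redo the positivity analysis of $\mu_2$ and $\mu_4$ over the quadrant $\{s>0,\,t<0\}$. First I would observe that the Dziobek reduction leading to (\ref{eqtn4bp}) is insensitive to the sign of $t$: the signed areas $\Delta_{123}=-2s$, $\Delta_{124}=-(s+t)$, $\Delta_{134}=-2t$ and the distances $R_{13}=1/8$, $R_{12}=R_{23}=(1+s^{2})^{-3/2}$, $R_{14}=(1+t^{2})^{-3/2}$ keep their algebraic forms for every real $t$. The single sign-sensitive ingredient is $r_{24}=|s+t|$, so that $R_{24}=|s+t|^{-3}$ rather than $(s+t)^{-3}$. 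This forces a split of the quadrant along the collision line $s+t=0$ (equivalently $s=|t|$) into $R^{+}=\{s>|t|\}$, where $s+t>0$ and the formulas for $\mu_2,\mu_4$ coincide verbatim with those of Theorem~\ref{theorem3}, and $R^{-}=\{0<s<|t|\}$, where the flip $R_{24}=-(s+t)^{-3}$ reverses the cube term in the relevant factors.

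Next I would carry out the sign count on the two pieces separately. On $R^{+}$ one checks that $f_{6}(s,t)>0$ throughout, since $s+t<s<\sqrt{s^{2}+1}$ whenever $t<0$; hence $\mathrm{sign}\,\mu_2=\mathrm{sign}\big(t\,f_{5}(t)\big)$, so $\mu_2>0$ exactly on $-\sqrt3<t<0$, while $\mathrm{sign}\,\mu_4=\mathrm{sign}\big(f_{7}(s)\,f_{8}(s,t)\big)$ places $\mu_4>0$ between the curve $s=h_{2}(t)$ (the zero set of $f_{8}$ for $s+t>0$) and the line $s=\sqrt3$ (the zero of $f_{7}$). On $R^{-}$ the reversed sign of $R_{24}$ makes the factor $(1+t^{2})^{-3/2}-R_{24}$ negative for all admissible $s$, which collapses $\mu_4>0$ to the single inequality $s<\sqrt3$; the companion factor $(1+s^{2})^{-3/2}-R_{24}$ changes sign precisely across $s=h_{1}(t)$, and combined with the sign of $f_{5}(t)$ this yields $\mu_2>0$ on $\{t<-\sqrt3,\ s>h_{1}(t)\}$ together with $\{-\sqrt3<t<-1,\ s<h_{1}(t)\}$.

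I would then locate the critical abscissae by intersecting these boundary curves with the lines $s=\sqrt3$ and $s=0$. The values in the statement are exactly the crossings $h_{1}(-(2+\sqrt3))=\sqrt3$ (giving the left endpoint $-3.73\approx-(2+\sqrt3)$), $h_{1}(-1)=0$, $f_{5}(-\sqrt3)=0$, and $h_{2}(-1/\sqrt3)=\sqrt3$, supplemented by the monotonicity of $h_{1}$ and $h_{2}$ on $t<0$. Intersecting $\{\mu_2>0\}$ with $\{\mu_4>0\}$ piece by piece reproduces the four disjoint strips of $TR_{t_{-}}$: the band $t<-\sqrt3$ lives in $R^{-}$ and is cut out by $h_{1}(t)<s<\sqrt3$; the band $-\sqrt3<t\le-1$ contributes an $R^{-}$ part $0<s<h_{1}(t)$ and an $R^{+}$ part $\sqrt3<s<h_{2}(t)$; and the two remaining $t$-ranges lie entirely in $R^{+}$, bounded by $h_{2}(t)$ and $\sqrt3$.

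The hard part will be the bookkeeping across the collision line $s+t=0$, where $R_{24}\to\infty$ and the expressions switch their sign convention, so I must verify which of $R^{\pm}$ each candidate sub-region actually occupies. The delicate comparisons are establishing $h_{1}(t)<|t|$ for all $t<-1$ (so that $s=h_{1}(t)$ genuinely lies in $R^{-}$) and $h_{2}(t)>|t|$ (so that $s=h_{2}(t)$ lies in $R^{+}$), together with the placement of $h_{1},h_{2}$ relative to $s=\sqrt3$; once these inequalities are pinned down the region $TR_{t_{-}}$ assembles unambiguously.
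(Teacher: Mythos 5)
Your proposal is correct and takes essentially the same route as the paper's own (very sketchy) proof: recycle the Theorem~\ref{theorem3} mass-ratio formulas for $\mu_2$ and $\mu_4$, determine where each is positive over the quadrant $\{s>0,\,t<0\}$, and intersect — the paper merely states the two positivity regions $TR_{m_2}$ and $TR_{m_4}$ (with boundary curve $H(s)=-\sqrt{s^2+1}-s$, which is exactly the inverse of your $s=h_1(t)$) and declares their intersection to be $TR_{t_-}$. What you add is precisely the detail the paper glosses over: the explicit split along the collision line $s+t=0$ forced by $R_{24}=|s+t|^{-3}$, the sign bookkeeping on each side of it, and the exact identification of the numerical endpoint $-3.73$ as $-(2+\sqrt{3})$, the solution of $h_1(t)=\sqrt{3}$; your assembled region agrees with the stated $TR_{t_-}$.
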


\begin{proof}
The proof of the Corollary 5 follows the same procedure as the Theorem 3.
Therefore, we only give a sketch of this proof and leave the details
to the interested readers.

Using the same procedure as above we can show that the mass $m_{2}$ is
positive in the region%
\begingroup\makeatletter
\check@mathfonts
\begin{eqnarray*}
TR_{m_2}(s,t) &=&\{(s,t)|(0<s<\frac{1}{\sqrt{3}}\wedge(-\sqrt{3}<t<H(s)
\\
&\vee& -s<t<0)) \vee (\frac{1}{\sqrt{3}}<s<\sqrt{3} \wedge \\
&\wedge& (H(s)<t<-\sqrt{3}\vee -s<t<0)) \vee \\
&\vee& (s>\sqrt{3}\wedge (H(s)<t<-s\vee\\
&\vee& -\sqrt{3}<t<0))\},
\end{eqnarray*}
\endgroup
where
\[
H(s)=-\sqrt{s^{2}+1}-s.
\]

Similarly, $m_{4}$ is positive in the region
\begingroup\makeatletter\def\f@size{8}
\check@mathfonts
\begin{eqnarray*}
TR_{m_{4}}(s,t) &=&(t<-\sqrt{3}\wedge (0<s<\sqrt{3}\vee -t<s<h_{2}(t))) \\
&\vee& (-\sqrt{3}<t<-\frac{1}{\sqrt{3}}\wedge (0<s<-t\vee \\
&\vee& \sqrt{3}<s<h_{2}(t))) \vee (-\frac{1}{\sqrt{3}}<t<0\wedge \\
&\wedge& (0<s<-t\vee h_{2}(t)<s<\sqrt{3})).
\end{eqnarray*}
\endgroup
The intersection of $TR_{m_{2}}(s,t)$ and $TR_{m_{4}}(s,t)$ give $%
TR_{t_{-}}(s,t).$
\end{proof}

\begin{figure}[!htb]
\includegraphics[width=0.45\textwidth]{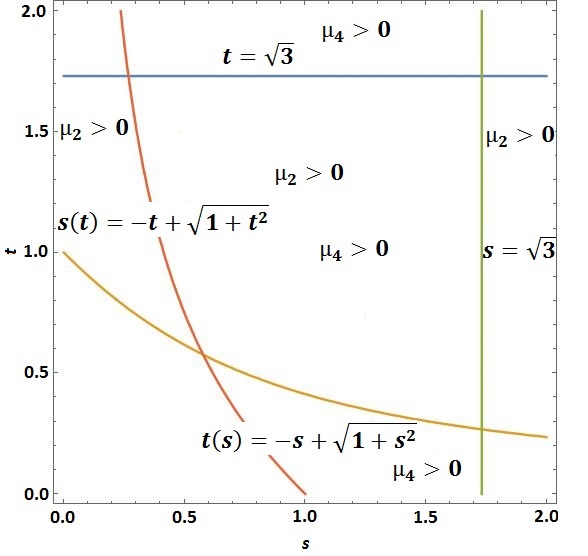}
\caption{Central Configuration region in the case of Rhomboidal four-body
problem }
\label{CC4BPRhombus}
\end{figure}


\section{Proof of Theorem \protect\ref{theorem4}}


\subsection{Proof of Theorem \protect\ref{theorem4}a}

Equations (\ref{12a}), (\ref{12b}) and (\ref{12c})
of Lemma 1 define the CC of rhomboidal or triangular five-body problems.
Let $\mu _{0}=\frac{m_{0}}{m_{1}}$, $\mu _{2}=\frac{m_{2}}{m_{1}}$, $%
\mu _{4}=\frac{m_{4}}{m_{1}}$ and rewrite the equations from Lemma 1 as
\begin{eqnarray*}
A_{1}m_{1}+B_{1}m_{2}+C_{1}m_{4} &=&0, \\
A_{2}m_{0}+B_{2}m_{1}+C_{2}m_{4} &=&0, \\
A_{3}m_{0}+B_{3}m_{1}+C_{3}m_{2} &=&0,
\end{eqnarray*}%
where
\begin{eqnarray*}
A_{1} &=&(R_{03}-R_{13})\Delta _{013},A_{2}=(R_{10}-R_{20})\Delta _{120}, \\
A_{3} &=&(R_{10}-R_{40})\Delta _{140}, \\
B_{1} &=&(R_{02}-R_{12})\Delta _{012},B_{2}=(R_{13}-R_{23})\Delta _{123}, \\
B_{3} &=&(R_{13}-R_{43})\Delta _{143}, \\
C_{1} &=&(R_{04}-R_{14})\Delta _{014},C_{2}=(R_{14}-R_{24})\Delta _{124}, \\
C_{3} &=&(R_{12}-R_{42})\Delta _{142}.
\end{eqnarray*}%
Divide the above equations by $m_{1}$ and write them in the following
matrix form:
\begin{equation}
\left[
\begin{array}{ccc}
0 & B_{1} & C_{1} \\
A_{2} & 0 & C_{2} \\
A_{3} & C_{3} & 0%
\end{array}%
\right] \cdot \left[
\begin{array}{c}
\mu _{0} \\
\mu _{2} \\
\mu _{4}%
\end{array}%
\right] =-\left[
\begin{array}{c}
A_{1} \\
B_{2} \\
B_{3}%
\end{array}%
\right].  \label{14}
\end{equation}%
After a number of row operation, a simultaneous solution of the above system
is obtained as below.%
\begin{eqnarray}
\mu _{0} &=&\frac{{A}_{{1}}{C}_{{2}}{C}_{{3}}-({B}_{{1}}{B}_{{3}}{C}_{{2}}+{B%
}_{{2}}{C}_{{1}}{C}_{{3}})}{{A}_{{2}}{C}_{{1}}{C}_{{3}}+{A}_{{3}}{B}_{{1}}{C}%
_{{2}}},  \nonumber \\
\mu _{2} &=&\frac{{A}_{{3}}{B}_{{2}}{C}_{{1}}-({A}_{{1}}{A}_{{3}}{C}_{{2}}+{A%
}_{{2}}{B}_{{3}}{C}_{{1}})}{{A}_{{2}}{C}_{{1}}{C}_{{3}}+{A}_{{3}}{B}_{{1}}{C}%
_{{2}}}, \\
\mu _{4} &=&\frac{{A}_{{2}}{B}_{{1}}{B}_{{3}}-({A}_{{1}}{A}_{{2}}{C}_{{3}}+{A%
}_{{3}}{B}_{{1}}{B}_{{2}})}{{A}_{{2}}{C}_{{1}}{C}_{{3}}+{A}_{{3}}{B}_{{1}}{C}%
_{{2}}}.  \nonumber
\end{eqnarray}%
This completes the proof of Theorem 4a.

\subsection{Proof of Theorem \protect\ref{theorem4}b}

The required values of $R_{ij}$ and $\Delta _{ijk}$ can be generated from
Lemma 1 by the substitution of $s=1$.
\begin{lemma}
\label{lemma5} The function
\[
\mu _{0}=\frac{{A}_{{1}}{C}_{{2}}{C}_{{3}}-({B}_{{1}}{B}_{{3}}{C}_{{2}}+{B}_{%
{2}}{C}_{{1}}{C}_{{3}})}{{A}_{{2}}{C}_{{1}}{C}_{{3}}+{A}_{{3}}{B}_{{1}}{C}_{{%
2}}}
\]%
attains positive value in the region
\begin{equation}
R_{\mu _{0}}=(R_{D}\cap R_{N_{\mu _{0}}})\cup (R_{D}^{c}\cap R_{N_{\mu
_{0}}}^{c}),  \label{Rm0}
\end{equation}%
where $R_{D}^{c}$ and $R_{N_{\mu _{0}}}^{c}$ are the complements of the
regions $R_{D}$ and $R_{N_{\mu _{0}}}$.
\end{lemma}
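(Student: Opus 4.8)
The plan is to treat $\mu_0$ purely as a quotient, $\mu_0 = N_{\mu_0}/D$, with
\[
N_{\mu_0} = A_1 C_2 C_3 - (B_1 B_3 C_2 + B_2 C_1 C_3), \qquad D = A_2 C_1 C_3 + A_3 B_1 C_2,
\]
and to reduce the assertion to the elementary sign rule for fractions: a real quotient is positive exactly when its numerator and denominator are both positive or both negative. Once $R_{N_{\mu_0}}$ is defined as the set on which $N_{\mu_0}>0$ and $R_D$ as the set on which $D>0$, this rule gives at once
\[
\{\mu_0 > 0\} = (R_D \cap R_{N_{\mu_0}}) \cup (R_D^c \cap R_{N_{\mu_0}}^c),
\]
which is precisely the claimed description of $R_{\mu_0}$. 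Thus the structural part of the lemma is a tautology, and all the real content lies in locating the two sign regions explicitly.

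First I would make the substitution $s=1$ indicated just before the lemma and read off the nine coefficients $A_i,B_i,C_i$ from Lemma 1 as explicit functions of $(t,w)$, using the tabulated $R_{ij}$ and $\Delta_{ijk}$. For instance $A_1 = 2w\,((1+w^2)^{-3/2} - \frac{1}{8})$ and $B_3 = 2t\,(\frac{1}{8} - (1+t^2)^{-3/2})$, with the remaining coefficients following the same pattern: each is a product of a difference of two inverse-cube distance terms with the corresponding signed area. I would keep the linear factors $(w-1)$, $(w+t)$, $(1+t)$ visible rather than expanded, since they govern several sign changes and simplify the later analysis. Substituting these into $N_{\mu_0}$ and $D$ then yields two scalar functions of $(t,w)$ whose signs must be determined.

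The substantive step, and the main obstacle, is to characterize the zero sets $\{N_{\mu_0}=0\}$ and $\{D=0\}$ and to decide on which side of each the function is positive. Because every coefficient already carries a half-integer power of $1+w^2$, $1+t^2$, or $1+t$, the combinations $N_{\mu_0}$ and $D$ are not polynomials, so I would not attempt a closed-form factorization. Instead I would sign by inspection the factors that are monotone or linear, for example that $(1+w^2)^{-3/2} - \frac{1}{8}$ changes sign at $w=\pm\sqrt{3}$ and $(1+t^2)^{-3/2} - \frac{1}{8}$ at $t=\sqrt{3}$, thereby partitioning the $(t,w)$-plane into cells and reducing the problem to the sign of a smaller residual factor on each cell. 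The zero curves of those residual factors would then be traced numerically and their positive sides confirmed by test points, exactly the graphical technique already employed for $f_1$, $f_2$ and the roots of $G_3$ in the proof of Theorem \ref{theorems=t}. This produces the explicit boundaries defining $R_{N_{\mu_0}}$ and $R_D$.

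Finally I would assemble the region by combining the two sign sets according to the quotient rule, taking the union of the ``both positive'' cell $R_D \cap R_{N_{\mu_0}}$ with the ``both negative'' cell $R_D^c \cap R_{N_{\mu_0}}^c$ and discarding the measure-zero zero curves themselves. I expect the genuine difficulty to reside entirely in the sign bookkeeping across the cells cut out by $(w-1)$, $(w+t)$ and the critical values $w=\pm\sqrt{3}$, $t=\sqrt{3}$; the denominator $D=A_2 C_1 C_3 + A_3 B_1 C_2$ is the harder of the two, since it mixes the $w$-dependent factors $A_2,A_3$ with the $t$-dependent factors $C_1,C_3$ and admits no obvious common factor to remove.
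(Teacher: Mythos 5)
Your proposal is correct and follows essentially the same route as the paper: reduce $\mu_0>0$ to the same-sign condition on $N_{\mu_0}$ and $D$ (which makes the set identity $R_{\mu_0}=(R_D\cap R_{N_{\mu_0}})\cup(R_D^c\cap R_{N_{\mu_0}}^c)$ immediate), then determine $R_D$ and $R_{N_{\mu_0}}$ by case-by-case sign analysis of the coefficient products, settled numerically. The paper does exactly this, organizing the cases by the signs of the two terms $A_2C_1C_3$ and $A_3B_1C_2$ (for $D$) and of $A_1C_2C_3$ and $B_1B_3C_2+B_2C_1C_3$ (for $N_{\mu_0}$), with the explicit boundary curves obtained by numerical approximation in the body of the proof and in the Appendix.
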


\begin{proof}
Let $N_{\mu _{0}}={A}_{{1}}{C}_{{2}}{C}_{{3}}-({B}_{{1}}{B}_{{3}}{C}_{{2}}+{B%
}_{{2}}{C}_{{1}}{C}_{{3}})$ and $D={A}_{{2}}{C}_{{1}}{C}_{{3}}+{A}_{{3}}{B}_{%
{1}}{C}_{{2}}.$ For $\mu _{0}$ to be positive, $N_{\mu _{0}}$ and $D$ must
have the same sign.

The denominator $D$ will be positive when both factors ${A}_{{2}}{C}_{{1}}{C}%
_{{3}}$ and $A_{3}B_{1}C_{2}$ are positive or at least one of them is
positive, such that the positive part is greater than the absolute value of
the negative part. Both these factors are positive in
\begingroup
\makeatletter
\begin{eqnarray}
R_{da}(t,w) &=&\{(t,w)|(0<w\leq 0.41\wedge d_{1}(w)<t<1)  \nonumber \\
&\vee &(0.41<w<0.58\wedge d_{1}<t<0.5 \cdot d_{2}(w))  \nonumber \\
&\vee &(w>2.41\wedge 0.41<t<1)\},
\end{eqnarray}%
\endgroup where
\[
d_{1}(w)=\sqrt{w^{2}+1}-w,\quad d_{2}(w)=\frac{(1-w^{2})}{w}.
\]%
Similarly, when

\begin{enumerate}
\item ${A}_{{2}}{C}_{{1}}{C}_{{3}}>0$ and $A_{3}B_{1}C_{2}<0$ then ${A}_{{2}}{C}_{{1}}{C}_{{3}}$ $>$ $|A_{3}B_{1}C_{2}|$ in the following region. \begingroup%
\makeatletter
\begin{eqnarray*}
R_{db}(t,w) &=&\{(t,w)|(0.4<t\leq 0.58 \wedge \\
&\wedge& 0.6\cdot d_{2}(t) <w<1)\vee \\
&\vee& (0.58<t<1 \wedge d_{1}(t)<w<1)\}.
\end{eqnarray*}%
\endgroup

\item ${A}_{{2}}{C}_{{1}}{C}_{{3}}<0$ and $A_{3}B_{1}C_{2}>0$ then
$A_{3}B_{1}C_{2}$ $>$ $|{A}_{{2}}{C}_{{1}}{C}_{{3}}|$ in the following region. %
\begingroup\makeatletter
\begin{eqnarray*}
R_{dc}(t,w) &=&\{(t,w)|(w\leq -1.49\wedge 0<t<0.23) \\
&\vee &(-1.49<w<-1\wedge\\
&\wedge & 0<t<0.2 \cdot d_{2}(w)) \vee \\
&\vee &(1<w<2.41\wedge 0.41<t\leq 1)\}.
\end{eqnarray*}%
\endgroup
\end{enumerate}
Therefore, the denominator $D$ is positive in  (see Fig. \ref{Regions-m0}a):
\begin{equation}
R_{D}=R_{da}(t,w)\cup R_{db}(t,w)\cup R_{dc}(t,w).  \label{RD}
\end{equation}%

The numerator of $\mu _{0}$, $N_{\mu _{0}}$ is positive in  (see Fig. \ref{Regions-m0}b and Appendix):
\begin{equation}
R_{N_{\mu _{0}}}=R_{cN_{\mu _{0}}}(t,w)\cup R_{dN_{\mu _{0}}}(t,w).
\end{equation}%
Therefore, $\mu _{0}>0$ in the intersection of $R_{D}$
and $R_{N_{\mu _{0}}}$, and the
intersection of their complements. Region $R_{\mu _{0}}$ is shown in figure
\ref{Regions-m0}c. This completes the proof of Lemma \ref{lemma5}.
\begin{figure}[!htb]
\centering
\includegraphics[width=0.23\textwidth]{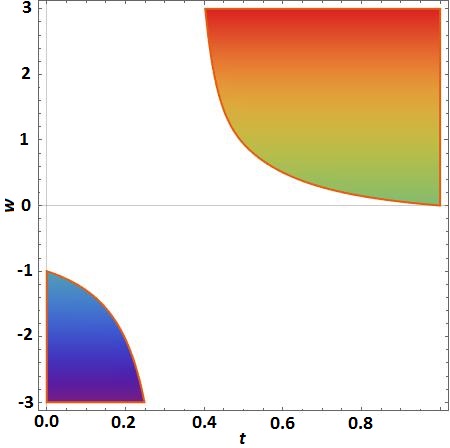}
\includegraphics[width=0.23\textwidth]{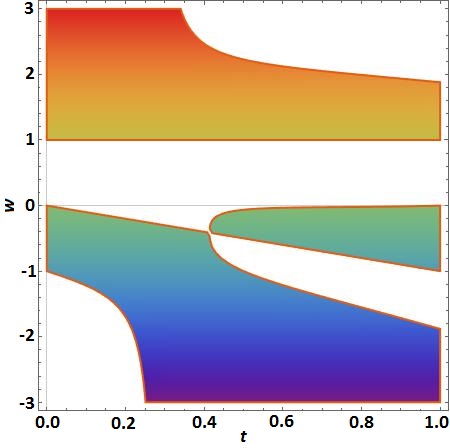}
\includegraphics[width=0.45\textwidth]{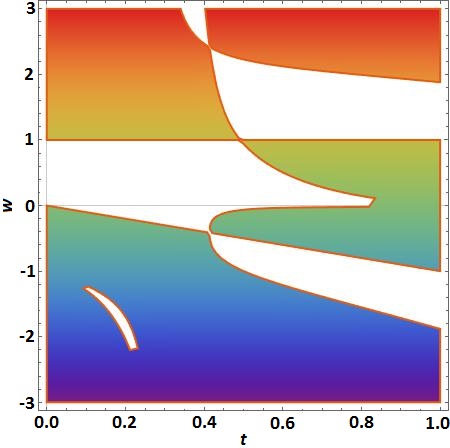}
\caption{Regions a)$R_{D}$ (colored) b) $R_{N_{\protect\mu _{0}}}$ (colored)
c) $R_{\protect\mu _{0}}$ (colored).}
\label{Regions-m0}
\end{figure}

\end{proof}

\begin{lemma}
\label{lemma6} The function
\[
\mu _{2}=\frac{A_{3}B_{2}C_{1}-(A_{1}A_{3}C_{2}+A_{2}B_{3}C_{1})}{{A}_{{2}}{C%
}_{{1}}{C}_{{3}}+{A}_{{3}}{B}_{{1}}{C}_{{2}}}
\]%
attains positive values in the region
\begin{equation}
R_{\mu _{2}}=(R_{D}\cap R_{N_{\mu _{2}}})\cup (R_{D}^{c}\cap R_{N_{\mu
_{2}}}^{c}),  \label{Rm2}
\end{equation}%
where \ $R_{D}^{c}$ and $R_{N_{\mu _{2}}}^{c}$ are complements of the
regions $R_{D}$ and $R_{N_{\mu _{2}}}$ respectively.
\end{lemma}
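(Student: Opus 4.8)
The plan is to exploit the fact that Lemma~\ref{lemma6} has exactly the same logical skeleton as Lemma~\ref{lemma5}: the quantity $\mu_2$ is a single quotient $N_{\mu_2}/D$, so $\mu_2>0$ precisely when the numerator $N_{\mu_2}=A_3B_2C_1-(A_1A_3C_2+A_2B_3C_1)$ and the denominator $D=A_2C_1C_3+A_3B_1C_2$ carry the same sign. This immediately yields the claimed decomposition $R_{\mu_2}=(R_D\cap R_{N_{\mu_2}})\cup(R_D^c\cap R_{N_{\mu_2}}^c)$ once the two sign regions $R_D=\{D>0\}$ and $R_{N_{\mu_2}}=\{N_{\mu_2}>0\}$ are pinned down, the complements supplying the ``both negative'' case. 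So the entire content of the proof is the determination of these two planar regions in the $(t,w)$ variables, and I would state the same-sign principle at the outset to reduce the lemma to that task.

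First I would observe that the denominator $D$ is literally the same expression that appeared in Lemma~\ref{lemma5}; therefore $R_D$ has already been computed and is given by equation~(\ref{RD}) as $R_D=R_{da}\cup R_{db}\cup R_{dc}$. No new work is needed here, and $R_D^c$ is its set-theoretic complement in the admissible parameter range. This reuse is precisely what makes the $\mu_2$ analysis parallel to the $\mu_0$ analysis rather than an independent computation, and I would flag it explicitly so that only the numerator remains to be handled.

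The substantive step is the numerator. I would substitute the explicit values of the $R_{ij}$ and $\Delta_{ijk}$ from Lemma~1 (with $s=1$, as dictated by Theorem~\ref{theorem4}b) into $A_1,A_2,A_3,B_2,B_3,C_1,C_2,C_3$, so that $N_{\mu_2}$ becomes an explicit function of $(t,w)$ built from the radicals $(1+w^2)^{3/2}$, $(1+t^2)^{3/2}$ and the linear factors $w\pm 1$, $w+t$, etc. I would then group $N_{\mu_2}$ into its constituent product terms $A_3B_2C_1$, $A_1A_3C_2$, $A_2B_3C_1$ and carry out a sign analysis by cases, exactly as was done for the denominator: identify where each product is individually positive or negative (these boundaries are the vanishing loci of the difference factors $R_{ij}-R_{kj}$, i.e. curves of the form $t=d_1(w)=\sqrt{w^2+1}-w$ and $t=\frac{1}{2}d_2(w)$ with $d_2(w)=(1-w^2)/w$), and on the mixed-sign overlaps compare magnitudes to decide the sign of the sum. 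The output is a region $R_{N_{\mu_2}}=R_{cN_{\mu_2}}\cup R_{dN_{\mu_2}}$ of the same flavour as $R_{N_{\mu_0}}$, with its boundary curves recorded and the more delicate magnitude comparisons relegated to the Appendix, as for $\mu_0$.

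The hard part will be the magnitude comparisons inside the numerator's mixed-sign cases: unlike the denominator, whose two summands factor cleanly into triples of the $A,B,C$ quantities, the numerator mixes three different triple products, and the dominant-term inequalities involve the radicals in a way that admits no closed-form root, so the crossover curves must be located numerically and then checked to be monotone on each piece before the region can be described by elementary bounds. Once $R_{N_{\mu_2}}$ is established, the proof concludes by forming $R_{\mu_2}=(R_D\cap R_{N_{\mu_2}})\cup(R_D^c\cap R_{N_{\mu_2}}^c)$ and reading off the region (to be displayed in the manner of Fig.~\ref{Regions-m0}), which completes the proof of Lemma~\ref{lemma6}.
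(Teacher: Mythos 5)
Your proposal follows exactly the paper's route: reduce $\mu_2>0$ to the same-sign condition on $N_{\mu_2}$ and $D$, reuse the region $R_D$ already established in Lemma~\ref{lemma5}, and determine $R_{N_{\mu_2}}$ by a case-by-case sign analysis of the constituent triple products (with the mixed-sign magnitude comparisons done numerically and relegated to the Appendix), then intersect as in equation~(\ref{Rm2}). The only cosmetic difference is bookkeeping --- the paper records $R_{N_{\mu_2}}$ as a union of three case regions $R_{aN_{\mu_2}}\cup R_{bN_{\mu_2}}\cup R_{cN_{\mu_2}}$ rather than your two --- which does not affect the argument.
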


\begin{proof}
Let $N_{\mu _{2}}=A_{3}B_{2}C_{1}-(A_{1}A_{3}C_{2}+A_{2}B_{3}C_{1}).$ For $%
\mu _{2}$ to be positive, $N_{\mu _{2}}$ and $D$ must have the same sign. We
have given the complete analysis of $D$ in Lemma \ref{lemma5}. Now, we need
to find regions where $N_{\mu _{2}}$ is positive. It is proved in the appendix that   $N_{\mu _{2}}(t,w)>0$ in
\begin{equation}
R_{N_{\mu _{2}}}=R_{aN_{\mu _{2}}}(t,w)\cup R_{bN_{\mu _{2}}}(t,w)\cup
R_{cN_{\mu _{2}}}(t,w).  \label{RNmu2}
\end{equation}%
\newline
Region $R_{N_{\mu _{2}}}$ is shown in figure \ref{Regions-m2}a.
Consequently, $\mu _{2}>0$ in the intersection of $R_{D}$ and $R_{N_{\mu
_{2}}}$, as well as the intersection of their complements. Region $R_{\mu
_{2}}$ is shown in figure \ref{Regions-m2}b. This completes the proof of
Lemma \ref{lemma6}.
\begin{figure}[!htb]
\centering
\includegraphics[width=0.23\textwidth]{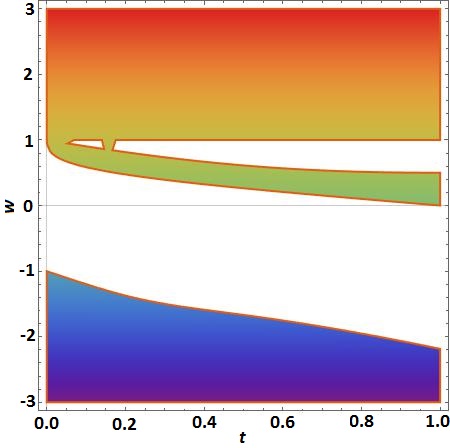} %
\includegraphics[width=0.23\textwidth]{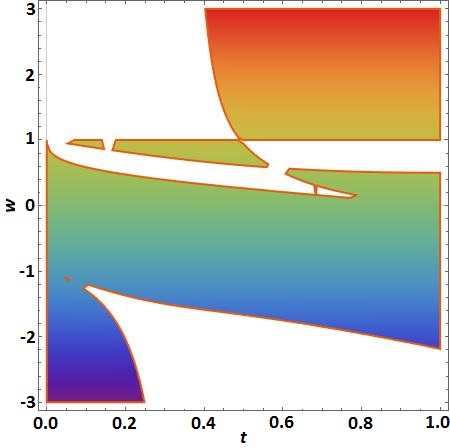}
\caption{Regions a) $R_{N_{\protect\mu _{2}}}$ (colored) b) $R_{\protect\mu %
_{2}}$ (colored). }
\label{Regions-m2}
\end{figure}

\end{proof}


\begin{lemma}
\label{lemma7} The function
\[
\mu_{4}=\frac{A_{2}B_{1}B_{3}-(A_{3}B_{1}B_{2}+A_{1}A_{2}C_{3})} {%
A_{2}C_{3}C_{1}+A_{3}B_{1}C_{2}}
\]
attains positive values in the region
\begin{equation}
R_{\mu _{4}}=(R_{D}\cap R_{N_{\mu _{4}}})\cup (R_{D}^{c}\cap R_{N_{\mu
_{4}}}^{c}).  \label{Rm4}
\end{equation}
\end{lemma}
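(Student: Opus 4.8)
The plan is to mirror the proofs of Lemmas \ref{lemma5} and \ref{lemma6} line for line, since $\mu_{4}$ carries the very same denominator $D = A_{2}C_{1}C_{3} + A_{3}B_{1}C_{2}$ whose sign has already been resolved into the region $R_{D}$ of equation (\ref{RD}). Writing $N_{\mu_{4}} = A_{2}B_{1}B_{3} - (A_{3}B_{1}B_{2} + A_{1}A_{2}C_{3})$ for the numerator, positivity of the quotient $\mu_{4} = N_{\mu_{4}}/D$ is equivalent to demanding that $N_{\mu_{4}}$ and $D$ share a sign. This at once decomposes the admissible set into the two disjoint pieces $R_{D} \cap R_{N_{\mu_{4}}}$, on which both are positive, and $R_{D}^{c} \cap R_{N_{\mu_{4}}}^{c}$, on which both are negative; their union is exactly the region (\ref{Rm4}) asserted in the lemma. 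The entire task therefore reduces to locating the set $R_{N_{\mu_{4}}}$ on which $N_{\mu_{4}} > 0$.

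First I would substitute $s = 1$ into the explicit distances and areas (\ref{R})--(\ref{delta}) of Lemma 1, so that each factor occurring in $N_{\mu_{4}}$, namely $A_{1}, A_{2}, A_{3}, B_{1}, B_{2}, B_{3}$ and $C_{3}$, becomes an elementary function of $(t,w)$. Several signs are then immediate: $B_{2}$ is a fixed positive constant once $s = 1$; $A_{1} = 2w[(1+w^{2})^{-3/2} - \frac{1}{8}]$ reverses sign only across $w = 0$ and $|w| = \sqrt{3}$; and $B_{3} = 2t[\frac{1}{8} - (1+t^{2})^{-3/2}]$ is negative for $0 < t < \sqrt{3}$ and positive beyond. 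Since $B_{2}$ is constant and $B_{1}$ is common to the first two terms, I would recast the numerator as $N_{\mu_{4}} = B_{1}(A_{2}B_{3} - A_{3}B_{2}) - A_{1}A_{2}C_{3}$, which exposes a cleaner structure and reduces the number of independent sign patterns to be tracked.

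The hard part is that $N_{\mu_{4}}$ is a difference of products, not a single product, so its sign cannot be read off from the signs of its factors alone: wherever the two groups $B_{1}(A_{2}B_{3} - A_{3}B_{2})$ and $A_{1}A_{2}C_{3}$ oppose one another one must compare their magnitudes, and the half-integer powers $(1+w^{2})^{3/2}$, $(1+t^{2})^{3/2}$ together with the cubes $|1-w|^{3}$, $|w+t|^{3}$, $(1+t)^{3}$ turn these comparisons into transcendental inequalities with no closed-form boundary. Proceeding exactly as in the magnitude analysis producing $R_{db}$ and $R_{dc}$ for the denominator in Lemma \ref{lemma5}, I would partition the $(t,w)$-plane along the vanishing loci of the individual factors, namely the lines $w = 0$, $w = 1$, $w + t = 0$, $t = \sqrt{3}$, $|w| = \sqrt{3}$ together with the curves $(1+w^{2})^{-3/2} = |1-w|^{-3}$, $(1+w^{2})^{-3/2} = |w+t|^{-3}$, $|1-w|^{-3} = 2^{-3/2}$ and $2^{-3/2} = (1+t)^{-3}$, determine the constant sign of $N_{\mu_{4}}$ on each resulting cell, and collect the cells on which it is positive, the competing-magnitude boundaries being pinned down numerically, into a union $R_{N_{\mu_{4}}} = R_{aN_{\mu_{4}}} \cup R_{bN_{\mu_{4}}} \cup \cdots$ recorded in the appendix, in the same notation used for $\mu_{0}$ and $\mu_{2}$. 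Intersecting $R_{N_{\mu_{4}}}$ and $R_{N_{\mu_{4}}}^{c}$ with $R_{D}$ and $R_{D}^{c}$ respectively then delivers (\ref{Rm4}) and closes the proof.
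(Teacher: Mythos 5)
Your proposal follows the paper's own proof essentially step for step: reduce positivity of $\mu_{4}$ to the requirement that $N_{\mu_{4}}$ and the denominator $D=A_{2}C_{1}C_{3}+A_{3}B_{1}C_{2}$ (already analyzed as $R_{D}$ in Lemma \ref{lemma5}) share a sign, which gives exactly $(R_{D}\cap R_{N_{\mu_{4}}})\cup(R_{D}^{c}\cap R_{N_{\mu_{4}}}^{c})$, and then locate $R_{N_{\mu_{4}}}$ by a sign-pattern case analysis on the constituent products with the competing-magnitude boundaries settled numerically and the resulting cells collected into a union $R_{aN_{\mu_{4}}}\cup R_{bN_{\mu_{4}}}\cup R_{cN_{\mu_{4}}}$ in the appendix. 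Your regrouping of the numerator as $B_{1}(A_{2}B_{3}-A_{3}B_{2})-A_{1}A_{2}C_{3}$ is only a cosmetic variant of the paper's three-case split on the signs of $A_{2}B_{1}B_{3}$ versus $A_{3}B_{1}B_{2}+A_{1}A_{2}C_{3}$; the method and its reliance on numerics are the same.
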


\begin{proof}
Let $N_{\mu _{4}}=A_{2}B_{1}C_{3}-(A_{3}B_{1}B_{2}+A_{1}A_{2}C_{3}).$ For
positive $\mu _{4}$, $N_{\mu 4}$ and $D$ must have the same sign. We
have given the complete analysis of $D$ in Lemma \ref{lemma5}. Now, we need
to find regions where $N_{\mu _{4}}$ is positive. It is shown in the appendix that $N_{\mu _{4}}(t,w)$ is positive in  (see Fig. \ref{Regions-m4}a):
\begin{equation}
R_{N_{\mu _{4}}}=R_{aN_{\mu _{4}}}(t,w)\cup R_{bN_{\mu _{4}}}(t,w)\cup
R_{cN_{\mu _{4}}}(t,w).  \label{RNmu4}
\end{equation}%
Thus, $\mu _{4}>0$ in the intersection of $R_{D}$ and $R_{N_{\mu _{4}}}$,
as well
as the intersection of their complements. Region $R_{\mu _{4}}$ is shown in
figure \ref{Regions-m4}b. This completes the proof of Lemma \ref{lemma7}.
\end{proof}
\begin{figure}[!tbh]
\centering
\includegraphics[width=0.23\textwidth]{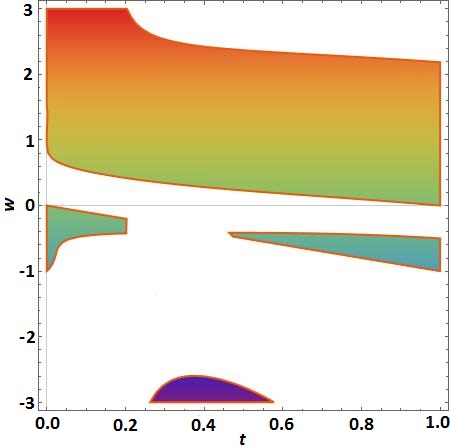} %
\includegraphics[width=0.23\textwidth]{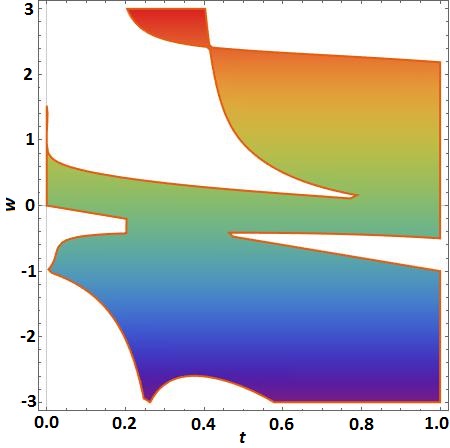}
\caption{Regions a) $R_{N_{\protect\mu _{4}}}$ (colored) b) $R_{\protect\mu %
_{4}}$ (colored).}
\label{Regions-m4}
\end{figure}

The central configuration region where all the masses have positive values
is determined by
\begin{equation}
R(t,w)=R_{\mu _{0}}\cap R_{\mu _{2}}\cap R_{\mu _{4}}.
\end{equation}%
The region $R(t,w)$ is given in figure \ref{Regions-cc}. This completes the
proof of Theorem \ref{theorem4}b. To better understand the nature of the complicated CC regions in  figure \ref{Regions-cc}, a number of relevant examples are given in figures  \ref{Rhombus1} and  \ref{Rhombus2}.
\begin{figure}[!htb]
\centering
{\ \includegraphics[width=0.45\textwidth]{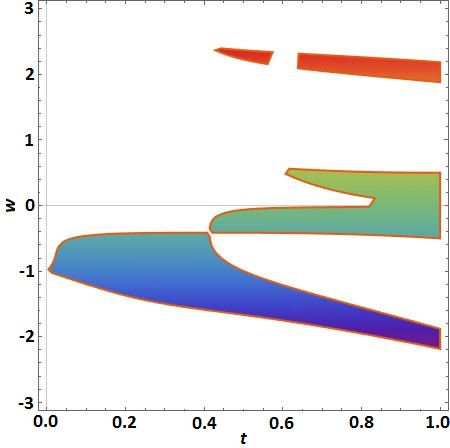}}
\caption{Region ($R(t,w)$) of central configuration in the case of Rhomboidal 5-body
problem.}
\label{Regions-cc}
\end{figure}

\begin{figure}[!htb]
\centering
\includegraphics[width=0.45\textwidth]{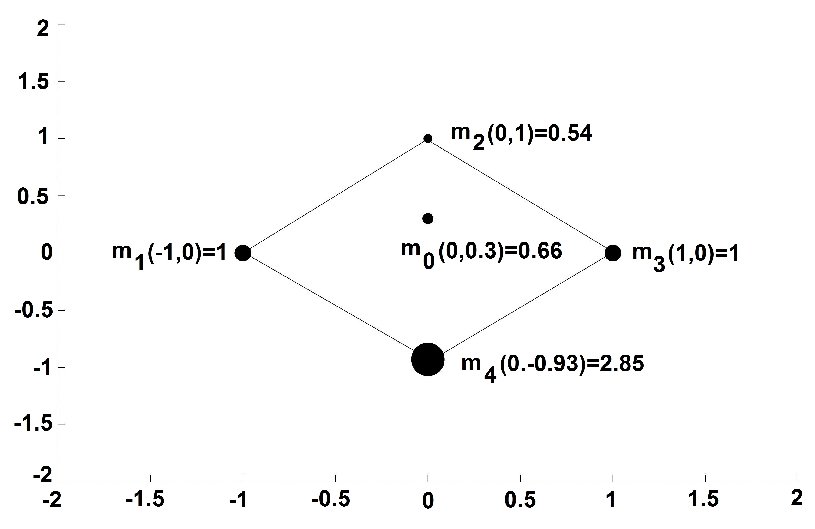}\\
\includegraphics[width=0.45\textwidth]{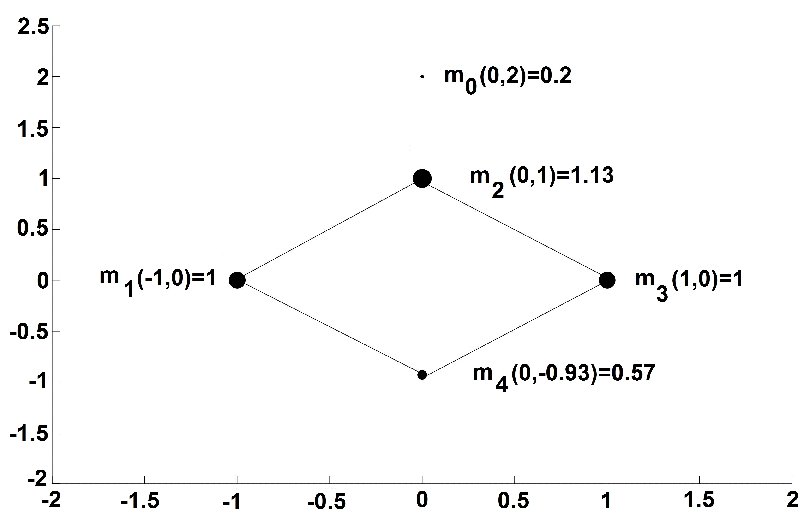}\\
\includegraphics[width=0.45\textwidth]{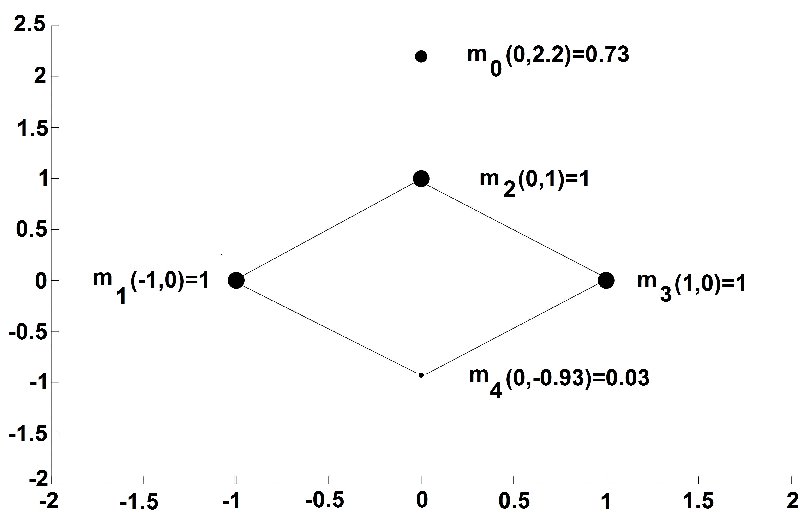}
\caption{Three examples of rhomboidal five-body CC when $w>0$ 
}
\label{Rhombus1}
\end{figure}
\begin{figure}[!htb]
\centering
\includegraphics[width=0.45\textwidth]{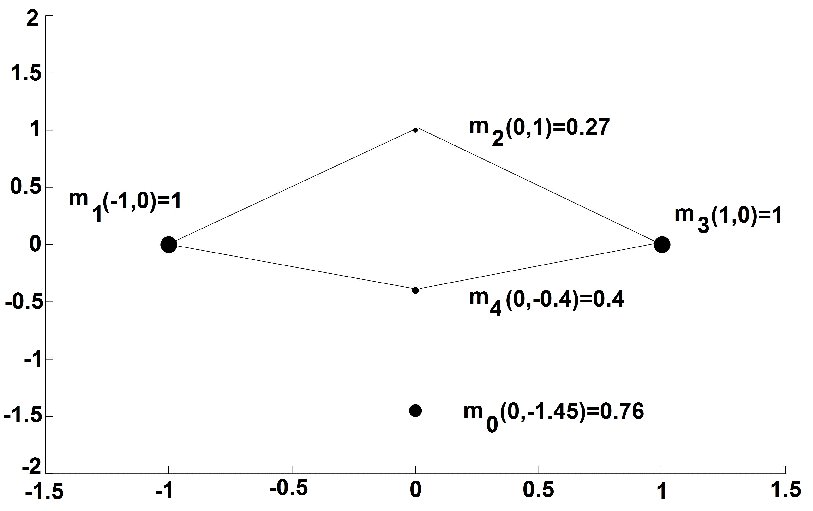}\\
\includegraphics[width=0.45\textwidth]{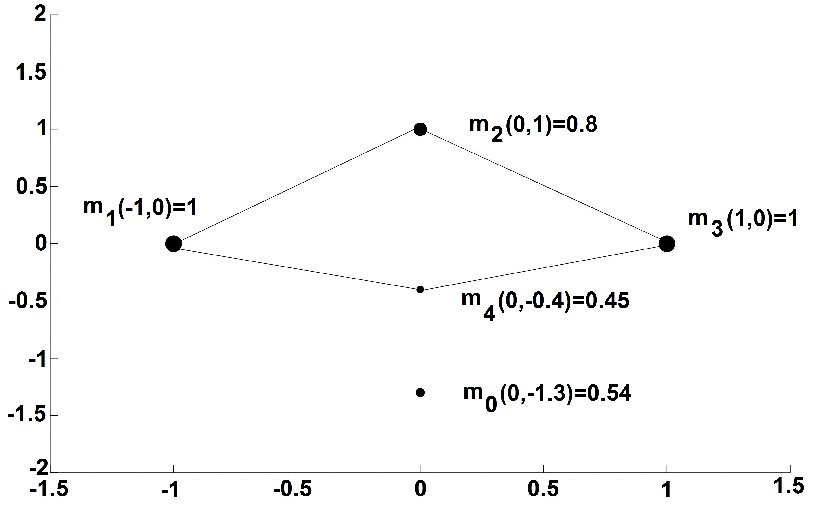}\\
\includegraphics[width=0.45\textwidth]{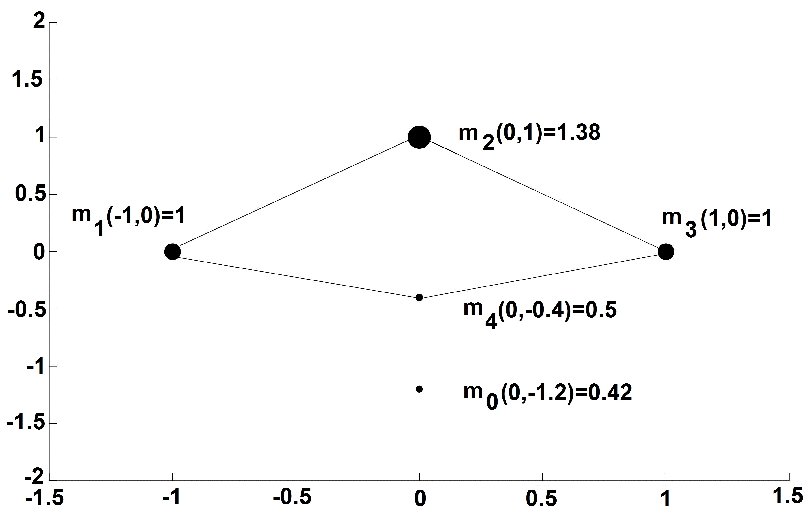}
\caption{Three examples of rhomboidal five-body CC when $w<0$ 
}
\label{Rhombus2}
\end{figure}
The considered variables $t\in (-1,0)$ and $s=1$ in Lemma 1, will give the five-body triangular configurations shown in figure 2.

\begin{corollary}
\label{CorTriangle} Consider $t\in (-1,0)$ in the setup of Theorem
\ref{theorem4}, guaranteeing a triangular five-body arrangement for $w>0$. Then
the configuration $(m_{0}$, $m_{1}$, $m_{2}$, $m_{1}$, $m_{4})$ will
form a central configuration in region:
\begin{equation}
R_{t_{-}}(t,w)=R_{m_{0}}(t,w)\cap R_{m_{2}}(t,w)\cap R_{m_{4}}(t,w).
\end{equation}
\end{corollary}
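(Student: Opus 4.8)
The plan is to follow, step for step, the argument for Theorem \ref{theorem4}, the only genuine change being that $t$ now ranges over $(-1,0)$ rather than over the positive reals. This moves $\mathbf{r}_4=(0,-t)$ onto the positive $y$-axis, interior to the triangle $m_1m_2m_3$, producing the triangular five-body arrangement of figure 2. First I would recompute the distances $R_{ij}$ and signed areas $\Delta_{ijk}$ of Lemma 1 under $s=1$, $t\in(-1,0)$, $w>0$. Sorting the nine coefficients of Theorem \ref{theorem4}a, one checks that $A_1,A_2,B_1,B_2$ involve neither the fourth body nor $t$ and so retain their Theorem \ref{theorem4}b form, whereas $A_3,B_3,C_1,C_2,C_3$ feel the sign of $t$ -- through $\Delta_{140}=\Delta_{014}=w+t$, $\Delta_{143}=2t$, $\Delta_{142}=1+t$, and through $R_{04}=|w+t|^{-3}$, $R_{24}=(1+t)^{-3}$ -- and must be re-evaluated for $t<0$ (the factors $R_{14}=R_{34}=(1+t^2)^{-3/2}$ are unchanged in form, depending on $t$ only through $t^2$).

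Next I would observe that the derivation in the proof of Theorem \ref{theorem4}a is purely linear-algebraic and never uses the sign of $t$: equations (\ref{12a})--(\ref{12c}) again assemble into the matrix system (\ref{14}), whose solution is the triple $(\mu_0,\mu_2,\mu_4)$ of Theorem \ref{theorem4}a written in the \emph{same} symbols $A_i,B_i,C_i$. Hence no new solvability argument is required; with the coefficients recomputed above, the configuration is automatically a central configuration at every $(t,w)$ for which all three ratios are positive.

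The substance of the proof is therefore the positivity analysis, carried out exactly as in Lemmas \ref{lemma5}--\ref{lemma7}. Writing $D=A_2C_1C_3+A_3B_1C_2$ and letting $N_{\mu_i}$ denote the corresponding numerator, one has $\mu_i>0$ iff $N_{\mu_i}$ and $D$ share a sign, so $R_{m_i}=(R_D\cap R_{N_{\mu_i}})\cup(R_D^{\,c}\cap R_{N_{\mu_i}}^{\,c})$. I would redo the sign determination of $D$ and of $N_{\mu_0},N_{\mu_2},N_{\mu_4}$ over the strip $\{-1<t<0,\ w>0\}$, locate their zero curves, read off the sign on each cell, and intersect the three positivity regions to obtain $R_{t_-}(t,w)=R_{m_0}\cap R_{m_2}\cap R_{m_4}$.

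The hard part is this two-variable sign chase. The numerators are high-degree expressions mixing the irrational factors $(1+t^2)^{3/2}$ and $(1+w^2)^{3/2}$ with $|w+t|^{-3}$ and $(1+t)^{-3}$, so their zero sets are transcendental curves with no closed form; moreover the sign of $w+t$ splits the strip along $w=-t$, and the two pieces must be analysed separately. As in the Appendix for the rhomboidal case, I expect to extract the defining relations of these boundaries and fix the signs cell by cell graphically and numerically, the logical skeleton being identical to that of Theorem \ref{theorem4}b.
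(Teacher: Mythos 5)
Your proposal follows essentially the same route as the paper: re-evaluate the coefficients $A_i,B_i,C_i$ of Theorem \ref{theorem4}a for $s=1$, $t\in(-1,0)$, $w>0$, reuse the linear solution of system (\ref{14}) unchanged, and then determine the regions $R_{m_0},R_{m_2},R_{m_4}$ by a cell-by-cell numerical and graphical sign analysis before intersecting them. The only difference is one of streamlining rather than substance: the paper observes that the denominator $D=A_2C_1C_3+A_3B_1C_2$ is negative throughout the strip $w>0$, $t\in(-1,0)$, so each $R_{m_i}$ is simply the set where the corresponding numerator is negative, collapsing your general decomposition $(R_D\cap R_{N_{\mu_i}})\cup(R_D^{\,c}\cap R_{N_{\mu_i}}^{\,c})$ to a single case -- a fact your sign chase on $D$ would uncover at its first step.
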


\begin{proof}
Consider $t<0$ and solve the equation (\ref{14}) in the same way as in the
proof of Theorem \ref{theorem4} to obtain the following regions of central
configurations for the isosceles triangular five-body problem, given in figure
2. We will give a sketch of the proof, and details to the
interested readers (as it follows the same procedure as in Theorem \ref%
{theorem4}). The denominator is always negative when $w>0$. Therefore, the
mass ratios $\mu _{0}$, $\mu _{2}$ and $\mu _{4}$ will be positive when their
respective numerators are negative. Thus, the central configuration regions
where $\mu _{0}$, $\mu _{2}$ and $\mu _{4}$ are respectively positive, are given
below. 
\begin{eqnarray*}
R_{m_{0}} &=&(-1<t<-w\wedge 0<w<1)\vee \\
&\vee&(1<w<1.73\wedge (-1<t<d_{2}(w)\vee \\
&\vee&\frac{0.5}{w}-0.5 w<t<0))\vee 1.5<w<p),
\end{eqnarray*}%
\begin{eqnarray*}
R_{m_{2}} &=&(0<w<1\wedge (-1<t<-w\vee -w<t<0)) \\
&\vee &(-1<t<d_{2}(w)\wedge 1.73205<w<2.41421) \\
&\vee &t>d_{2}(w),
 \end{eqnarray*}%
\begin{eqnarray*}
R_{m_{4}} &=&(0<w<1\wedge -w<t<0)\vee \\
&\vee&(1<w<1.73205\wedge -1<t<0)\vee \\
&\vee & (1.73<w<q)\vee (1.6<w<r),
\end{eqnarray*}%
where
\begin{eqnarray*}
p(t) &=&84.13t^{5}+301.01t^{4}+424.28t^{3}+295.98t^{2}+ \\
&+&103.56t+16.68, \\
q(t) &=&-3.91t^{3}-7.57t^{2}-4.22t+1.22, \\
r(t) &=&305.66t^{3}+52t^{2}+0.25t+1.76.
\end{eqnarray*}%
The central configuration region for the triangular five-body problem is
\begin{equation}
R_{t_{-}}(t,w)=R_{m_{0}}(t,w)\cap R_{m_{2}}(t,w)\cap R_{m_{4}}(t,w).
\end{equation}%
This region is given in figure \ref{Regions-tminus}d alongside $R_{m_{0}}$
(figure \ref{Regions-tminus}a)$,R_{m_{2}}($figure \ref{Regions-tminus}b) and
$R_{m_{4}}$(figure \ref{Regions-tminus}c). The CC region in the figure \ref%
{Regions-tminus} corresponds to the triangular solutions of the five-body problem.
\begin{figure}[!htb]
\centering
\includegraphics[width=0.23\textwidth]{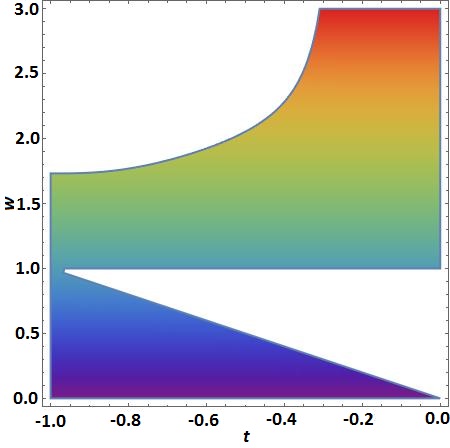}
\includegraphics[width=0.23\textwidth]{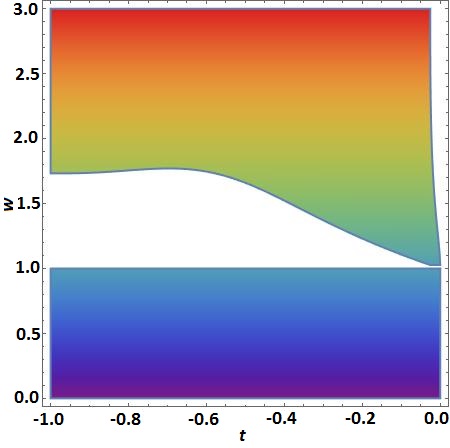}
\includegraphics[width=0.23\textwidth]{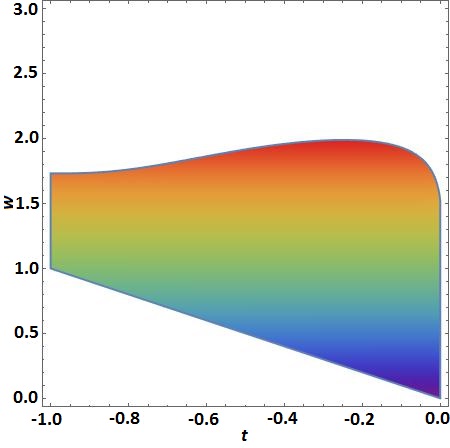}
\includegraphics[width=0.23\textwidth]{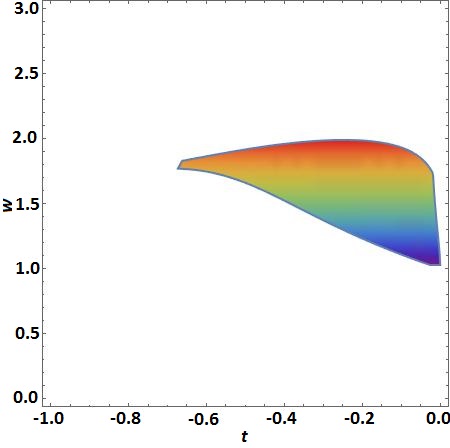}
 \caption{Regions of central configuration in the case of triangular five-body
problem a) $m_{0}>0$, $R_{m_{0}}(t,w)$ b) $m_{2}>0$, $R_{m_{2}}(t,w)$ c) $%
m_{4}>0$, $R_{m_{4}}(t,w)$ d) $m_{i}>0$, $i=0,2,4$, $R_{t_{-}}(t,w)$.}
\label{Regions-tminus}
\end{figure}
\end{proof}

Similarly, by taking $s=\sqrt{3}$, the configuration in figure 2 will become
an equilateral triangle. The CC regions for this equilateral triangle five-body
configuration can be found in the same way as in Theorem \ref{theorem4} or
Corollary \ref{CorTriangle}. Figure  \ref{triangles} shows the physical position of masses in the case of triangular five-body central configurations.
\begin{figure}[!htb]
\centering
\includegraphics[width=0.45\textwidth]{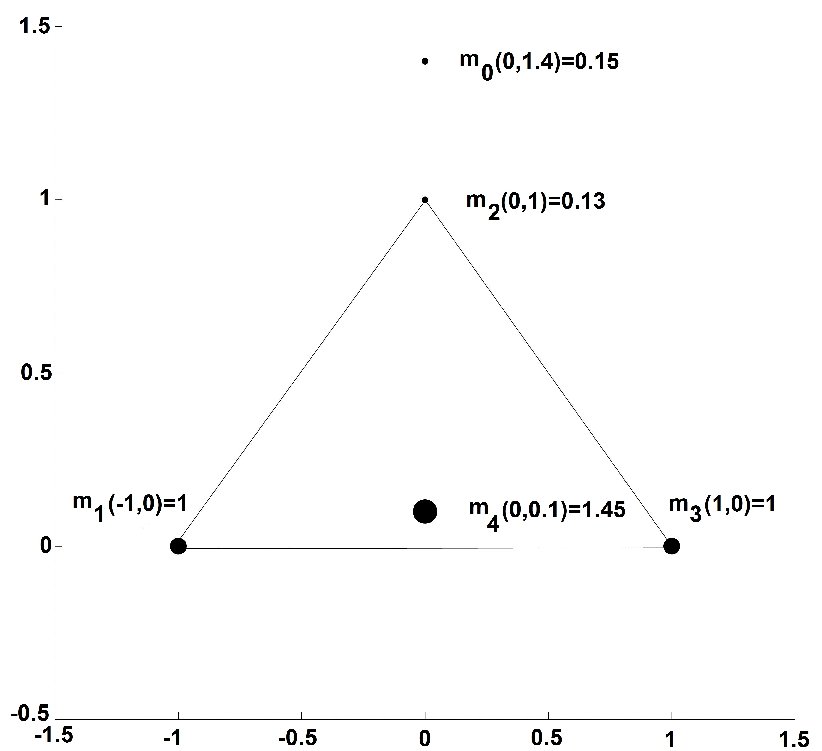}\\
\includegraphics[width=0.45\textwidth]{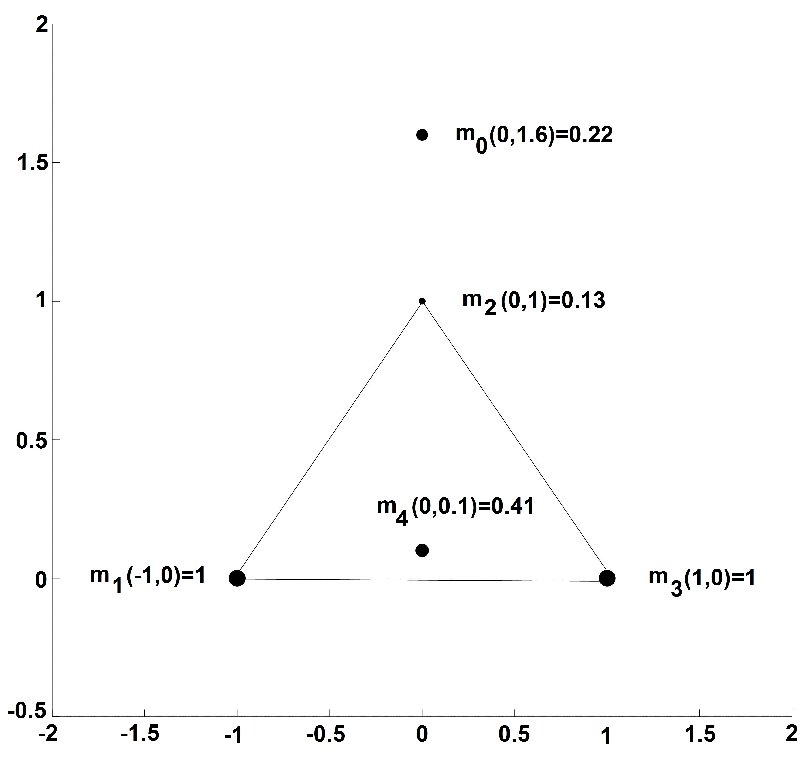}\\
\includegraphics[width=0.45\textwidth]{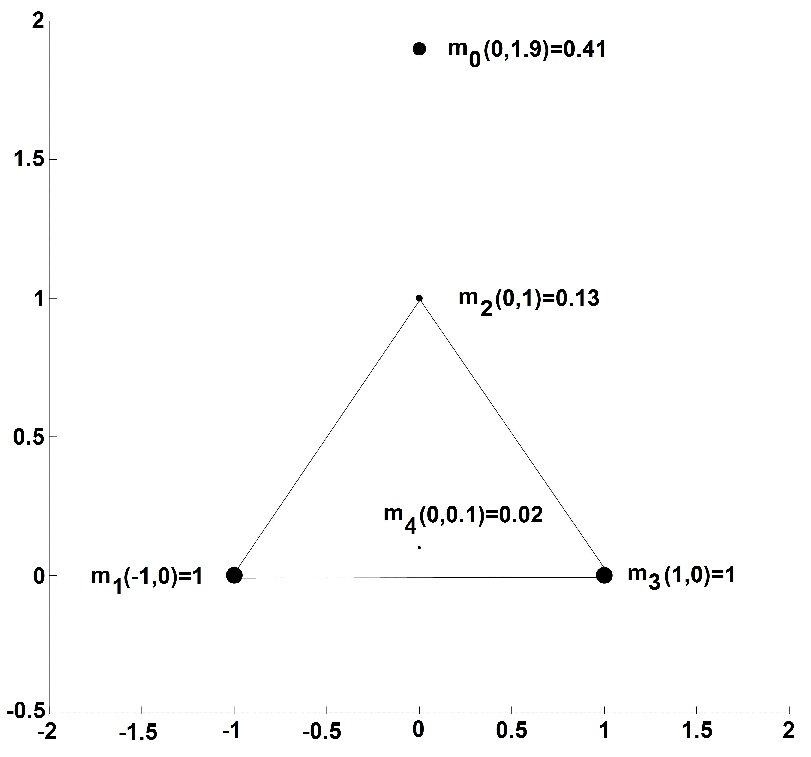}
\caption{Three examples of triangular five-body CC 
}
\label{triangles}
\end{figure}

\subsection*{Note:}

It is impossible to find analytically the CC regions in the full general cases,
i.e. without fixing $s=1$. However, in the current analysis, the majority of the
solutions are obtained. 

\section{Hamiltonian formulation and numerical applications}

In this section, we will introduce the Hamiltonian formalism for the
rhomboidal five-body problem. This five-body problem is introduced in Lemma 1. Then we are going to use
regularized equations and Poincar\'{e} surface of section to investigate the
effect of the increasing value of central mass on the stability of the
proposed five-body problem. Using Poincar\'{e} surface of section we will
study two special cases to identify chaotic regions and quasi-periodic
orbits, with a special focus on the varying central mass. The two cases are:

\begin{enumerate}
\item[\textbf{{a.}}] Four equal masses with a varying central mass.

\item[\textbf{{b.}}] Two pairs of equal masses at the vertices of the
rhombus with a varying central mass.
\end{enumerate}

Two particular cases with only two degrees of freedom are known for the
four-body problem, namely the one-dimensional symmetric four-body problem and
the rhomboidal symmetric four-body problem (see \cite{Lacomba1993}). These
two problems are simpler, but they still have chaotic solutions.
As \cite{diacu2003} showed, symmetries played an essential role in searching for periodic orbits in most of the problems of celestial mechanics \citep{miocbarbosu2003}.
The Hamiltonian
system in two degrees of freedom can be conveniently investigated using the
Poincar\'{e} surface of section. The Poincar\'{e} section plays
an important role in understanding the few-body problem, specifically the
existence of periodic orbits \citep{burgos2013}.\\
Let us consider five point masses $m_{1}(x_{1}=-x_{3}, 0)$, $m_{2}(0, y_{2})$, $m_{3} (x_{3},0)$, $m_{4}(0, -y_{4})$, and  $m_{5}
(0, y_{5})$ in a fixed plane $Oxy$ (see Fig. \ref%
{S1-5}). Let us take $m_{1}=m_{3}$, and the generalized momenta as $p_{i}^{\prime }$, where $i=\overline{1,n}$, to get the Hamiltonian:
\begingroup\makeatletter\def\f@size{8}
\begin{eqnarray}\label{generalHam}
H &=&\sum_{i=1}^{5}\frac{{p^{\prime }}_{i}^{2}}{m_{i}}-\frac{2m_{1}m_{2}}{%
\sqrt{x_{3}^{2}+{y_{2}}^{2}}}-\frac{2m_{1}m_{4}}{\sqrt{x_{3}^{2}+{y_{4}}^{2}}%
}-\frac{2m_{1}m_{5}}{\sqrt{x_{3}^{2}+{y_{5}}^{2}}}- \nonumber\\
&-&\frac{m_{1}^{2}}{2x_{3}}-\frac{m_{2}m_{4}}{\sqrt{(y_{4}+y_{2})^{2}}}-%
\frac{m_{2}m_{5}}{\sqrt{(y_{5}-y_{2})^{2}}}-\frac{m_{4}m_{5}}{\sqrt{%
(y_{4}+y_{5})^{2}}}.
\end{eqnarray}
\endgroup
Consider $m_{2}=m_{4}$ to be symmetric on the axis $Oy$, ($y_{2}=y_{4}$), and the mass $m_{5}=m_{0}$ stationary at the origin ($y_{5}=0$%
). The Hamiltonian then becomes: \begingroup\makeatletter%
\begin{eqnarray}
H &=&\frac{{p^{\prime }}_{2}^{2}}{2m_{2}}+\frac{{p^{\prime }}_{3}^{2}}{2m_{1}%
}-\frac{2m_{1}m_{2}}{\sqrt{x_{3}^{2}+{y_{2}}^{2}}}-  \nonumber \\
&-&\frac{m_{1}(m_{1}+4m_{0})}{4x_{3}}-\frac{m_{2}(m_{2}+4m_{0})}{4y_{2}}.
\end{eqnarray}%
\endgroup
Represent the generalized coordinates as $x_{3}:=q_{1}$, and $%
y_{2}:=q_{2}$, and the generalized momenta as $p_{3}^{\prime }:=p_{1}$, and $%
p_{4}^{\prime }:=p_{2}$, then the corresponding Hamiltonian takes the form %
\begingroup\makeatletter
\begin{eqnarray}
H &=&\frac{p_{1}^{2}}{2m_{1}}+\frac{p_{2}^{2}}{2m_{2}}-\frac{2m_{1}m_{2}}{%
\sqrt{q_{1}^{2}+{q_{2}}^{2}}}-  \nonumber \\
&-&\frac{m_{1}(m_{1}+4m_{0})}{4q_{1}}-\frac{m_{2}(m_{2}+4m_{0})}{4q_{2}},
\end{eqnarray}%
\endgroup The corresponding canonical equations of motion are
\begin{eqnarray}
q_{1} &=&\frac{p_{1}}{m_{1}},  \label{eqofmotion} \\
q_{2} &=&\frac{p_{2}}{m_{2}},  \nonumber \\
p_{1} &=&-\frac{2m_{1}m_{2}}{(q_{1}^{2}+{q_{2}}^{2})^{3/2}}q_{1}-\frac{%
m_{1}(m_{1}+4m_{0})}{2q_{1}^{2}},  \nonumber \\
p_{2} &=&-\frac{2m_{1}m_{2}}{(q_{1}^{2}+{q_{2}}^{2})^{3/2}}q_{2}-\frac{%
m_{2}(m_{2}+4m_{0})}{2q_{2}^{2}}.  \nonumber
\end{eqnarray}%
Since we have singularities in the equations of motion, we use a
regularization technique, called the double Levi-Civita transformation to
regularize the Hamiltonian (see \cite{szucs1}).

Introduce the fictitious time $\tau $ as
\begin{equation}
\frac{dt}{d\tau }=q_{1}q_{2},
\end{equation}%
and the transformed coordinates as
\begin{equation}
q_{i}=Q_{i}^{2},i=1,2.  \label{LCcoord}
\end{equation}

In order to achieve regularization we extend the coordinate transformation
to a canonical transformation by introducing the generating function $W$
(see \cite{csillik2003}) as
\begin{equation}
W=p_{1}Q_{1}^{2}+p_{2}Q_{2}^{2}.
\end{equation}%
From the generating function $P_{j}=\sum\limits_{i=1}^{2}p_{i}\frac{\partial
q_{i}}{\partial Q_{j}}$, $j=1,2$, we obtain the transformed  momenta
\begin{equation}
p_{i}=\frac{P_{i}}{2Q_{i}},i=1,2,  \label{LCmomenta}
\end{equation}%
where $P_{i}$ are the new momenta.

The canonical symplectic form of equations (\ref{eqofmotion}) will be
preserved, if the new Hamiltonian is
\begin{equation}
\overline{H}=H-h_{0},
\end{equation}%
where $h_{0}$ is the constant of total energy.

Therefore, the transformed Hamiltonian  is given by \begingroup\makeatletter
\begin{eqnarray}
\overline{H} &=&\frac{1}{8}\left( \frac{P_{1}^{2}Q_{2}^{2}}{m_{1}}+\frac{%
P_{2}^{2}Q_{1}^{2}}{m_{2}}\right) -\frac{1}{4}(m_{1}(4m_{0}+m_{1})Q_{2}^{2}
\nonumber \\
&+&m_{2}(4m_{0}+m_{2})Q_{1}^{2})-\frac{2m_{1}m_{2}Q_{1}^{2}Q_{2}^{2}}{\sqrt{%
Q_{1}^{4}+Q_{2}^{4}}}-h_{0}Q_{1}^{2}Q_{2}^{2},  \nonumber
\end{eqnarray}%
\endgroup and the transformed equations of motion are
\begin{eqnarray}
\frac{dQ_{i}}{d\tau } &=&\frac{\partial \overline{H}}{\partial P_{i}},
\label{neweqofmotion} \\
\frac{dP_{i}}{d\tau } &=&-\frac{\partial \overline{H}}{\partial Q_{i}},\quad
i=1,2.  \nonumber
\end{eqnarray}%
To construct the Poincar\'{e} surface of section and find the corresponding
periodic orbits (see \cite{cheb1996}), we plot the motion from the 4D phase
space ($Q_{1}$, $Q_{2}$, $P_{1}$, $P_{2}$) in a "cut" plane ($Q_{1}=0$, $%
Q_{2}$, $P_{1}>0$, $P_{2}$). Since $\overline{H}$ is conserved,
any point in this surface of section will uniquely define the orbit.

To investigate the effect of central mass on the existence of quasi-periodic
orbits, consider $m_{1}=m_{2}=$$m_{3}=$$m_{4}=1$ and

\begin{itemize}
\item[(i.)] $m_{0}=1$,
\item[(ii.)] $m_{0}=3$,
\item[(iii.)] $m_0=5$.
\end{itemize}

Consider $Q_{1}(0)=0$ and $P_{1}(0)>0$, to investigate the motion in the
$(Q_{2},P_{2})$ plane, where the corresponding constant total energy $h_{0}$
have the values $-0.46$, $-0.9,$ and $-1.3$. Figures \ref{poin111}, \ref%
{poinM311}, and \ref{poinM511} show the results for the gradually increased
central mass. In addition, we also give a representative periodic orbit in
each case.
\begin{figure}[!htb]
\includegraphics[width=0.45\textwidth]{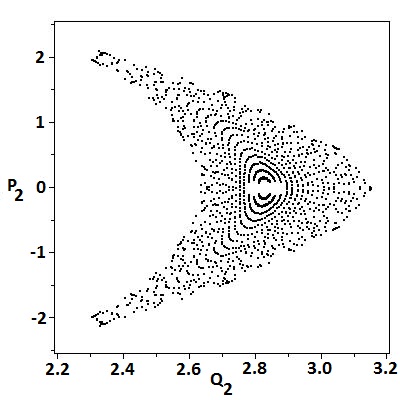}\\ %
\includegraphics[width=0.45\textwidth]{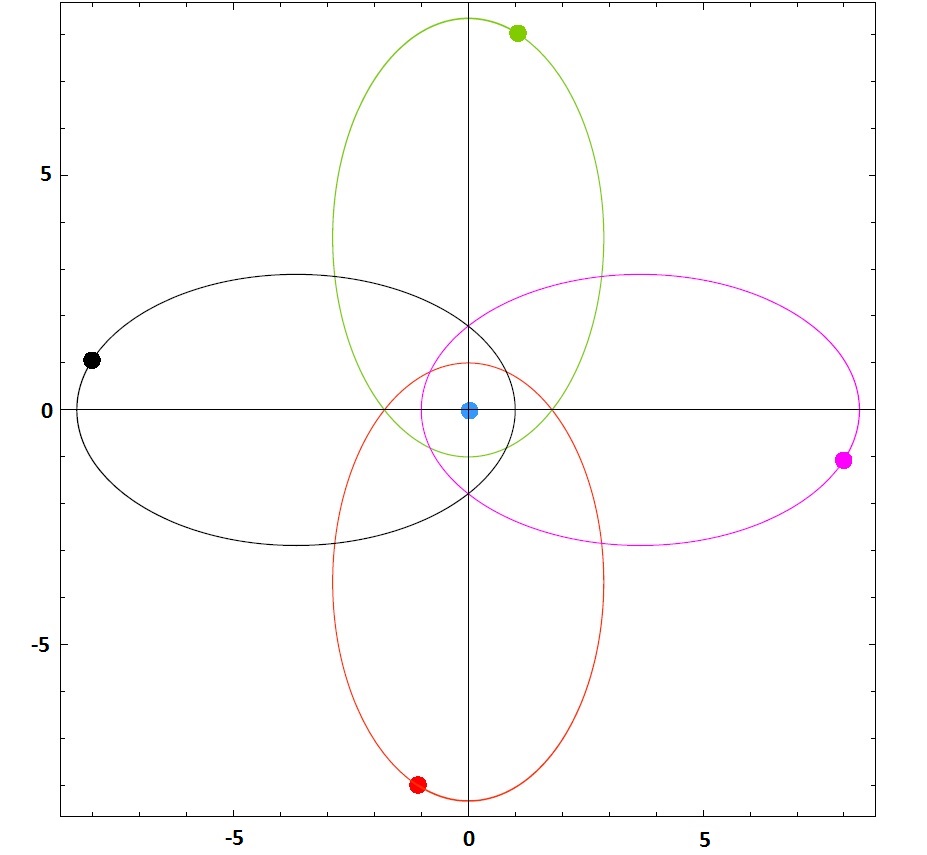}
\caption{a) Poincar\'{e} section, when the central mass is $1$. b) A
representative periodic orbit corresponding to the centre.}
\label{poin111}
\end{figure}

\begin{figure}[!htb]
\includegraphics[width=0.45\textwidth]{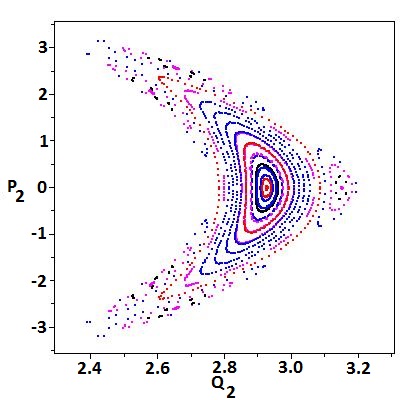}\\ %
\includegraphics[width=0.45\textwidth]{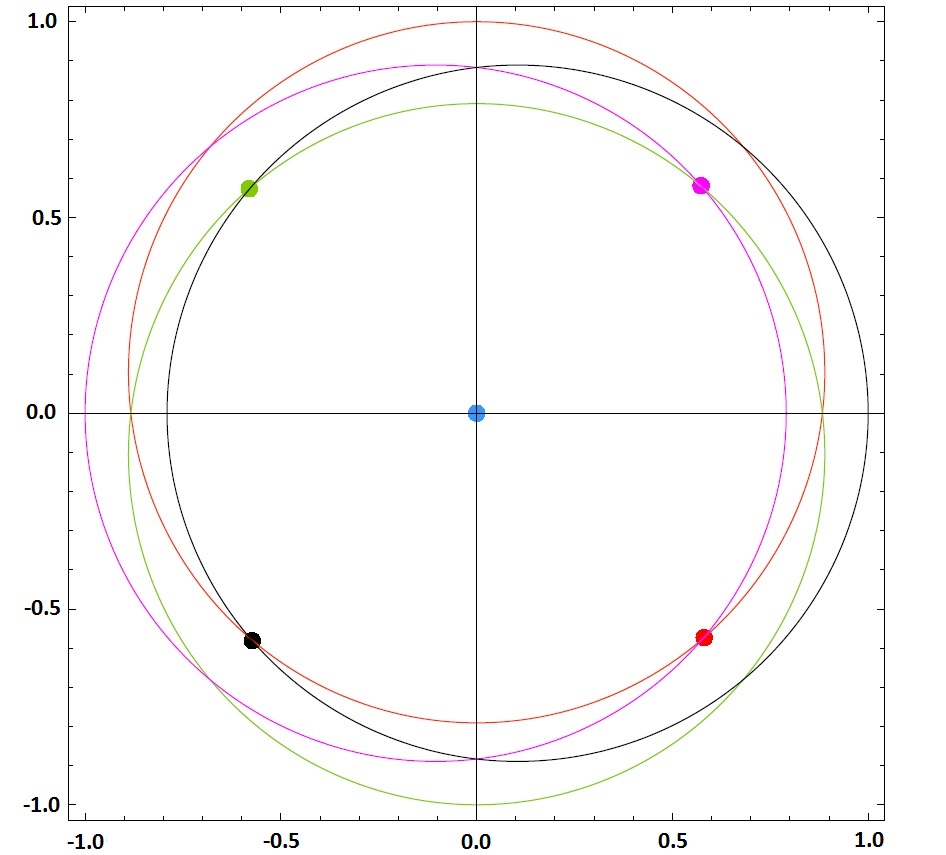}
\caption{Poincar\'{e} section, when the central mass is $3$. b) A
representative periodic orbit corresponding to the centre}
\label{poinM311}
\end{figure}

\begin{figure}[!htb]
\includegraphics[width=0.45\textwidth]{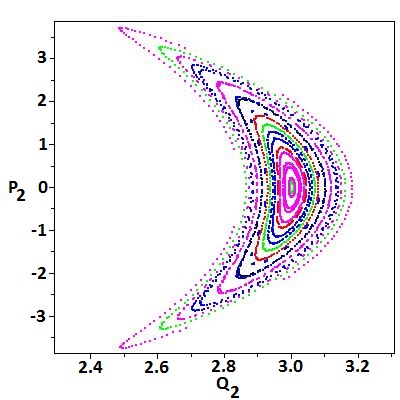}\\ %
\includegraphics[width=0.45\textwidth]{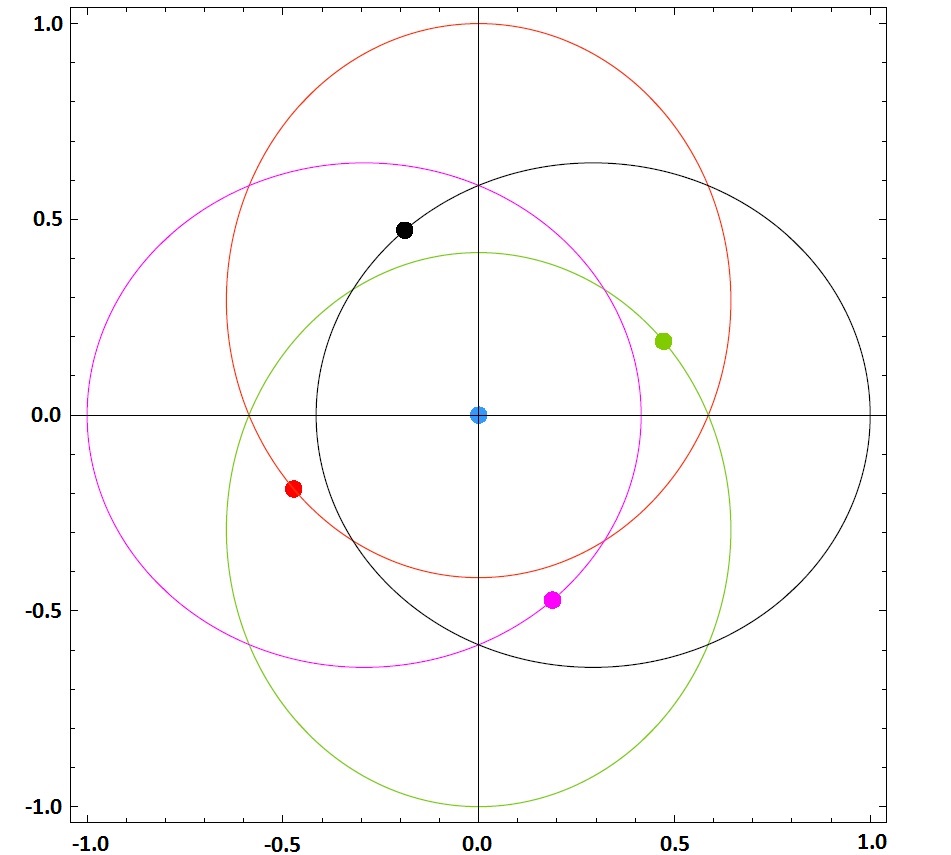}
\caption{a) Poincar\'{e} section, when the central mass is $5$. b) A
representative periodic orbit corresponding to the centre.}
\label{poinM511}
\end{figure}
In the equal mass case, there are a couple of quasi-periodic orbits at and
around $P_{2}=0$ and $Q_{2}=2.83$ (Fig. \ref{poin111}). The rest of the
points in figure \ref{poin111} are indicative of chaotic behavior. In figure %
\ref{poinM311}, the central mass is increased to $m_{0}=3$. It is clear from
figure \ref{poinM311} that the region with periodic orbits has begun to
increase in size. Although the outer region is still chaotic but
in inner region appears some regular structures. This is indicative of some
underlying structure in the dynamics. At $m_{0}=5$, five times bigger than
the outer masses, we can see higher number of quasi-periodic orbits around $%
Q_{2}=3$. The chaotic behavior which was very clear in the equal mass case
has completely disappeared. This indicates that the increase in central mass
plays a stabilizing role. Similar behavior was reported by \cite{Shoaib2008}
for the hierarchical stability of the Caledonian symmetric five-body problem
(CS5BP).

To complete the analysis given above, we discuss two more examples with
two pairs of equal masses at the vertices of the rhombus, and other example with one pairs of equal masses on the $x-$axis and three different masses on the $y-$axis (see Fig. \ref{Rhombus1}, \ref{Rhombus2} and \ref{triangles}).
Consider
\begin{itemize}
\item[a.] $m_{1}=1$, $m_{2}=2$, and the central mass $m_{0}=0.05$ ($%
h_{0}=-0.9$);

\item[b.] $m_{1}=1$, $m_{2}=1.1$, and the central mass $m_{0}=3$ ($%
h_{0}=-0.9$).


\item[c.] $m_{1}=m_{3}=1$, $m_{2}=0.27$, $m_4=0.4$, $m_0=0.76$
($h_{0}=-0.9$).

\end{itemize}
\begin{figure}[!htb]
\centering
\includegraphics[width=0.45\textwidth]{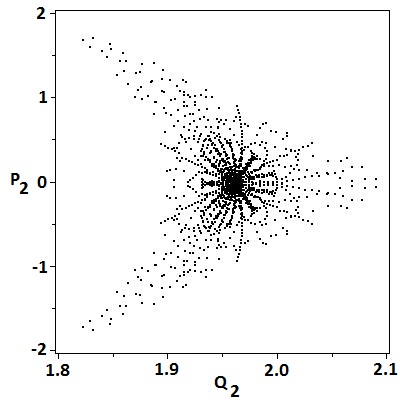}
\caption{Poincar\'{e} section in case a.}
\label{poinM0051209}
\end{figure}
\begin{figure}[!htb]
\centering
\includegraphics[width=0.45\textwidth]{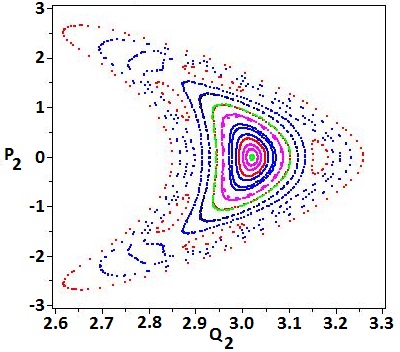}
\caption{Cross section at $Q_{1}=0$ in case b., showing the quasi-periodic orbits.}
\label{poinM3111}
\end{figure}

As in case a., the surfaces in figures \ref{poinM0051209} and \ref%
{poinM3111} show various types of orbits, including both circle-like
quasi-periodic orbits and island orbits. Note the deformation in some of the
circular orbits, indicating the presence of nearby island orbits. When the
central body is small, the surface of section presents very few
quasi-periodic orbits surrounded by chaotic region. As before, the increase
in central mass completely changes the dynamics of the problem.
A direct comparison of figures \ref{poin111} with figures \ref{poinM3111}
reveal the obvious effect of changing central mass on the stability and
existence of quasi-periodic orbits in the rhomboidal five-body problem
introduced in section 1.
\begin{figure}[!htb]
\centering
\includegraphics[width=0.45\textwidth]{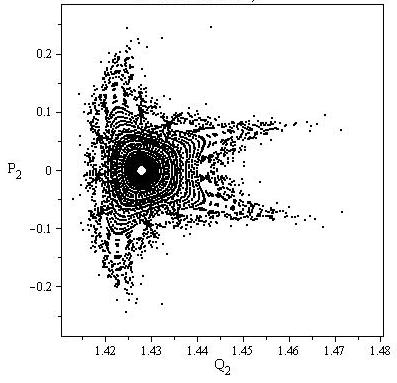}
\caption{Poincar\'{e} section in case c.}
\label{last}
\end{figure}

In case c. (see Fig. \ref{Rhombus2}a) three non-equal masses are distributed on the $y-$axis. We used the general form of the Hamiltonian (equation \ref{generalHam}), and the notations: $x_3:=q_1$, $y_2:=q_2$, $m_5$=$m_0$.  $y_4$ and $y_5$ are functions of $q_2$. As in previous cases, we constructed the corresponding new Hamiltonian (this transformation is trivial) and took $y_4=q_2-1.4$ and $y_5=q_2-2.45$. In this case we obtain
the Poincar\'{e} surface of section shown in figure \ref{last}. The inner region of figure \ref{last} points to the existence of regular structure and some quasi periodic orbits. The outer region exhibits chaotic behaviour.


\section{Conclusions}

In this paper we studied the central configurations of rhomboidal and triangular four- and five-body problems. (The configuration has only one symmetry with $m_{1}=m_{3}\neq m_{2}\neq m_{4}\neq m_{0}$). First, Dziobek's equations are derived for this particular set up. Then, we derive
regions of central configuration for various special cases of the general
problem. We prove that no central configurations are possible when two pairs of masses are placed at the vertices of a
rhombus, (which is symmetric both about the $x-$axis and the $y-$axis) and the fifth mass is placed anywhere on the axis of symmetry, except the origin. Using
analytical techniques, regions of central configurations are derived for all
other cases, including rhomboidal four- and five-body problems,
isosceles and equilateral triangular five-body problems. To complement the
analytical results, these regions are also explored numerically. Equations of
motion are regularized using the double Levi-Civita transformations, and from
the regularized equations we identify regions with quasi-periodic orbits. Moreover, we investigate the chaotic behavior in the phase space by means of the  Poincar\'{e} sections. We show that if we increase the central mass, this has a
stabilizing effect on the rhomboidal five-body configuration.

\section*{Appendix}

\subsection*{\textbf{Derivation of $R_{N_{\mu _{0}}}$}: }

The numerator of $\mu _{0}$, $N_{\mu _{0}}$ is positive when ${A}_{{1}}{C}_{{%
2}}{C}_{{3}}$ $>0$ and ${B}_{{1}}{B}_{{3}}{C}_{{2}}+{B}_{{2}}{C}_{{1}}{C}_{{3}%
}<0$ or when both factors have opposite sign and the positive factor is
greater than the absolute value of the negative factor. All these
possibilities are listed below with the corresponding regions, where $N_{\mu
_{0}}>0$

\begin{enumerate}
\item ${A}_{{1}}{C}_{{2}}{C}_{{3}}>0$ and ${B}_{{1}}{B}_{{3}}{C}_{{2}}+{B}_{{%
2}}{C}_{{1}}{C}_{{3}}<0$: Using numerical approximation techniques it is
obtained that $N_{\mu _{0}}>0$ in the following region:%
\begin{equation}
R_{cN_{\mu _{0}}}(t,w)=R_{aN_{\mu _{0}}}(t,w)\cup R_{bN_{\mu _{0}}}(t,w),
\end{equation}%
where \begingroup\makeatletter%
\begin{eqnarray*}
R_{aN_{\mu _{0}}}(t,w) &=&\{(t,w)|(0<t<0.41\wedge 0<w<1) \\
&\vee &(0.41<t<1\wedge w>2.41)\vee (-1.73 \\
&<&w\leq -1.5\wedge 0.75\cdot d_{2}(w)<t<1)\}, \\
R_{bN_{\mu _{0}}}(t,w) &=&(w<-1.8\wedge 0<t<0.23)\vee (-1.43 \\
&<&w\leq -1\wedge 0.56<t<1)\vee (-1<w \\
&<&-0.41\wedge 0.41<t<-w)\vee (2<w \\
&<&2.41\wedge 0.41<t<1)\}.
\end{eqnarray*}%
\endgroup

\item ${A}_{{1}}{C}_{{2}}{C}_{{3}}>0$ and ${B}_{{1}}{B}_{{3}}{C}_{{2}}+{B}_{{%
2}}{C}_{{1}}{C}_{{3}}>0$: It is numerically confirmed that ${B}_{{1}}{B}_{{3}%
}{C}_{{2}}+{B}_{{2}}{C}_{{1}}{C}_{{3}}>{A}_{{1}}{C}_{{2}}{C}_{{3}}$ when ${A}%
_{{1}}{C}_{{2}}{C}_{{3}}>0.$ Hence, $N_{\mu _{0}}<0$ for all $(t,w)$ such
that ${A}_{{1}}{C}_{{2}}{C}_{{3}}>0.$

\item ${A}_{{1}}{C}_{{2}}{C}_{{3}}<0$ and ${B}_{{1}}{B}_{{3}}{C}_{{2}}+{B}_{{%
2}}{C}_{{1}}{C}_{{3}}<0$. In this case using numerical approximation
techniques we find that $N_{\mu _{0}}$ is positive in the following region: %
\begingroup\makeatletter%
\begin{eqnarray*}
R_{dN_{\mu _{0}}}(t,w) &=&\{(t,w)|(-1.73<w<-1.5\wedge \\
&\wedge&0<t<0.25)\vee (-1.5\leq w<-1\wedge  \\
&\wedge&0<t<0.2\cdot d_{2}(w))\vee (0<t<0.41\\
&\wedge& -t<w <0) \vee  (0.41<t<1 \wedge\\
&\wedge& 0<w<1)\}.
\end{eqnarray*}%
\endgroup
\end{enumerate}

Hence, $N_{\mu _{0}}>0$ in (see Fig. \ref{Regions-m0}b):
\begin{equation}
R_{N_{\mu _{0}}}=R_{cN_{\mu _{0}}}(t,w)\cup R_{dN_{\mu _{0}}}(t,w).
\end{equation}%

\subsection*{\textbf{Derivation of $R_{N_{\mu _{2}}}$}: }

 The numerator of $\mu_2$, $N_{\mu _{2}}$, is always
positive when $A_{3}B_{2}C_{1}>0$ and $A_{1}A_{3}C_{2}+A_{2}C_{1}C_{3}<0.$
It is not necessarily positive when $A_{3}B_{2}C_{1}>0$ and $%
A_{1}A_{3}C_{2}+A_{2}C_{1}C_{3}<0$ or $A_{3}B_{2}C_{1}<0$ and $%
A_{1}A_{3}C_{2}+A_{2}C_{1}C_{3}<0.$ It might be positive in a segment of
this region or might not be positive at all. We will list all these
possibilities below with the corresponding regions, where $N_{\mu _{2}}>0.$

\begin{enumerate}
\item $A_{3}B_{2}C_{1}>0$ and $A_{1}A_{3}C_{2}+A_{2}C_{1}B_{3}<0$. Using
numerical approximation techniques we show that $N_{\mu _{2}}>0$ in%
\begin{equation}
R_{aN_{\mu _{2}}}(t,w)=aaN_{\mu _{2}}(t,w)\cup baN_{\mu _{2}}(t,w)\cup
caN_{\mu _{2}}(t,w),
\end{equation}%
where \begingroup\makeatletter
\begin{eqnarray*}
aaN_{\mu _{2}}(t,w) &=&\{(t,w)|(0<w\leq 0.41\wedge d_{1}(w)< \\
&<& t<1) \vee (0.41<w<0.58 \wedge \\
&\wedge &d_1(w)<t<0.5\cdot d_{2}(w)) \\
&\vee &(w>\sqrt{3}\wedge 0<t<d_{1}(w))\},
\end{eqnarray*}
\begin{eqnarray*}
baN_{\mu _{2}}(t,w) &=&\{(t,w)|(w\leq -2.41\wedge 0.463-\\
&-& 0.079\cdot w <t<1)\vee (-2.41<w<\\
&<& -1.73\wedge 0.463-0.079w < \\
&<& t < 0.5 d_{2}(w))\},\\
caN_{\mu _{2}}(t,w) &=&\{(t,w)|1<w<1.73\wedge\\
&\wedge& 0<t<d_{1}(w)\}.
\end{eqnarray*}%
\endgroup

\item $A_{3}B_{2}C_{1}>0$ and $A_{1}A_{3}C_{2}+A_{2}C_{1}B_{3}>0$. Using
numerical approximation techniques we show that $N_{\mu _{2}}>0$ in%
\begin{equation}
R_{bN_{\mu _{2}}}(t,w)=abN_{\mu _{2}}(t,w)\cup bbN_{\mu _{2}}(t,w)\cup
cbN_{\mu _{2}}(t,w),
\end{equation}%
where%
\begin{eqnarray*}
abN_{\mu _{2}}(t,w) &=&\{(t,w)|-1.73<w<-1 \wedge\\
&\wedge &0<t<0.5\cdot d_{2}(w)\},
\end{eqnarray*}
\begin{eqnarray*}
bbN_{\mu _{2}}(t,w) &=&\{(t,w)|1<w<1.73 \wedge\\
&\wedge &0<t<d_{1}(w)\},
\end{eqnarray*}
\begin{eqnarray*}
cbN_{\mu _{2}}(t,w) &=&\{(t,w)|(w\leq -2.41\wedge \\
&\wedge& 0<t<1)\vee (-2.41<w<-1.73 \\
&\wedge& 0<t<0.5\cdot d_{2}(w))\}.
\end{eqnarray*}

\item $A_{3}B_{2}C_{1}<0$ and $A_{1}A_{3}C_{2}+A_{2}C_{1}B_{3}<0$. Using
numerical approximation techniques we show that $N_{\mu _{2}}>0$ in%
\begin{equation}
R_{cN_{\mu _{2}}}(t,w)=acN_{\mu _{2}}(t,w)\cup bcN_{\mu _{2}}(t,w),
\end{equation}%
where\begingroup\makeatletter
\begin{eqnarray*}
acN_{\mu _{2}}(t,w) &=&\{(t,w)|(-2.4<w<-2\wedge  \\
&\wedge& 0.5\cdot d_{2}(w)<t<1)\vee  \\
&\wedge& (1<w<1.73\wedge d_{1}(w)<t<1)\}, \\
bcN_{\mu _{2}}(t,w) &=&\{(t,w)|(1.31\cdot t^{2}-2.012\cdot t+0.9<\\
&<&w<0.58\wedge 0<t<d_{1}(w))\vee  \\
&\vee& (1.31\cdot t^{2}-2.012 \cdot t+0.9\leq w<1 \\
&\wedge& 0<t<0.5 \cdot d_{2}(w))\vee  \\
&\vee& (w>1.732\wedge d_{1}(w)<t<1)\}.
\end{eqnarray*}%
\endgroup Therefore, $N_{\mu _{2}}(t,w)>0$ in region%
\begin{equation}
R_{N_{\mu _{2}}}=R_{aN_{\mu _{2}}}(t,w)\cup R_{bN_{\mu _{2}}}(t,w)\cup
R_{cN_{\mu _{2}}}(t,w).
\end{equation}%
\end{enumerate}

\subsection*{\textbf{Derivation of $R_{N_{\mu _{4}}}$}: }

$N_{\mu _{4}}$ is always
positive when $A_{2}B_{1}C_{3}>0$ and $A_{3}B_{1}B_{2}+A_{1}A_{2}C_{3}<0.$
It is not necessarily positive when $A_{2}B_{1}C_{3}>0$ and $%
A_{3}B_{1}B_{2}+A_{1}A_{2}C_{3}<0$ or when both the factors are negative. It
might be positive in a segment of this region or might not be positive at
all. We will list all these possibilities below with the corresponding
regions, where $N_{\mu _{4}}>0$.

\begin{enumerate}
\item When $A_{2}B_{1}C_{3}>0,$ and $A_{3}B_{1}B_{2}+A_{1}A_{2}C_{3}<0$, $%
N_{\mu _{4}}>0$ in region\begingroup\makeatletter
\begin{eqnarray*}
R_{aN_{\mu _{4}}} &=&\{(t,w)|(1.73<w<2.4\wedge 0.<t<d_{1})\vee \\
&\vee& (-1<w<-0.41\wedge -w<t<1)\vee \\
&\vee& (0<w<1 \wedge d_{1}<t<1) \vee \\
&\vee& (w<-377.64t^{4}+677.58t^{3}-458.06t^{2}\\
&+&136.9t-17.79\wedge 0.25<t<0.6) \vee\\
&\vee &(-1<w<-0.41\wedge 0<t<0.05)\vee \\
&\vee& (1.73<w<2.41\wedge d_{1}<t<0.41)\}.
\end{eqnarray*}%
\endgroup

\item When $A_{2}B_{1}C_{3}>0,$ and $A_{3}B_{1}B_{2}+A_{1}A_{2}C_{3}>0$, $%
N_{\mu _{4}}>0$ in region \begingroup\makeatletter
\begin{eqnarray*}
R_{bN_{\mu _{4}}} &=&\{(t,w)|(0.1<w<1\wedge 0.41<t<d_{1}) \\
&\vee &(1<w<1.73\wedge d_{1}<t<0.41) \\
&\vee &(1.73<w<2.41\wedge 0.41<t<1) \\
&\vee &(-1<w<-0.41\wedge -w<t<1) \\
&\vee &(0<w<1\wedge d_{1}<t<1) \\
&\vee &(1<w<1.73\wedge 0<t<d_{1}) \\
&\vee &(1.31t^{2}-2t+0.9<w<1\wedge\\
&\wedge& 0<t<0.41) \vee \\
&\vee &(1<w<1.73\wedge 0.41<t<1) \\
&\vee &(1.73<w<2.41\wedge d_{1}<t<0.41)\}.
\end{eqnarray*}%
\endgroup

\item When $A_{2}B_{1}C_{3}<0,$ and $A_{3}B_{1}B_{2}+A_{1}A_{2}C_{3}<0$, $%
N_{\mu _{4}}>0$ in region \begingroup\makeatletter
\begin{eqnarray*}
R_{cN_{\mu _{4}}} &=&\{(t,w)|(-0.41<w<0\wedge 0<t<-w)\vee  \\
&\vee&(w<-1.3t^{2}-1.9t+3.44\wedge \\
&\wedge& d_{1}<t<0.41) \\
&\vee &(w>2.41\wedge 0<t<d_{1})\}.
\end{eqnarray*}%
\endgroup
Therefore, $N_{\mu _{4}}(t,w)>0$ in  (see Fig. \ref{Regions-m4}a):
\begin{equation}
R_{N_{\mu _{4}}}=R_{aN_{\mu _{4}}}(t,w)\cup R_{bN_{\mu _{4}}}(t,w)\cup
R_{cN_{\mu _{4}}}(t,w).
\end{equation}%
\end{enumerate}

\acknowledgments
We thank the editor and the anonymous reviewer for their constructive comments, which helped us improve the presentation of the manuscript. We also thank Professor Mihail Barbosu for correcting and improving the language of the manuscript. I. Sz\"{u}cs-Csillik is partially supported by a grant of the Romanian Ministry of National Education and Scientific Research, RDI Programme for Space Technology and Advanced Research - STAR, project number 513, 118/14.11.2016.


\end{document}